\documentclass[superscriptaddress,showpacs,twocolumn,nofootinbib,aps,pre]{revtex4-2}
\usepackage{fontawesome5}
\usepackage{comment}
\usepackage{float}
\usepackage[percent]{overpic}
\usepackage[normalem]{ulem}
\usepackage{graphicx}
\usepackage{color}
\usepackage{xcolor}
\usepackage{verbatim}
\usepackage{subfigure}
\usepackage{amsmath}
\usepackage{amssymb}
\usepackage{amsthm}
\usepackage{mathtools}
\usepackage{comment}
\usepackage{chngcntr}
\usepackage{appendix}
\usepackage{lipsum}
\usepackage{dsfont}
\usepackage{xfrac}
\usepackage{xr}
\usepackage[makeroom]{cancel}
\usepackage{ulem}

\newtheorem{theorem}{Theorem}

\newtheorem{proposition}[theorem]{Proposition}

\newcommand{\dd}{\text{d}}

\newcommand{\ee}{\text{e}}

\newcommand{\p}{\partial}

\newcommand{\subfigref}[2]{\hyperref[#1]{\ref*{#1}#2}} % Include subfig letter in link.

\usepackage{amsthm}
\usepackage{bm}
\usepackage[hang,flushmargin]{footmisc}
\usepackage[colorlinks = true,linkcolor = blue,urlcolor = blue,citecolor = blue,anchorcolor = blue]{hyperref}
\definecolor{greencross}{RGB}{83, 170, 46}
\begin{document}

\title{Predicting the conditions for observing the Mpemba effect}
\author{Yue Liu}
\affiliation{Center for Gravitational Physics and Quantum Information, Yukawa Institute for Theoretical Physics, Kyoto University, Kitashirakawa Oiwakecho, Sakyo-ku, Kyoto 606-8502, Japan}

\author{Tan Van Vu}
\affiliation{Center for Gravitational Physics and Quantum Information, Yukawa Institute for Theoretical Physics, Kyoto University, Kitashirakawa Oiwakecho, Sakyo-ku, Kyoto 606-8502, Japan}

\author{Raphaël Chétrite}
\affiliation{CNRS Laboratoire Ypatia des Sciences Mathématiques (LYSM), Piazzale Aldo Moro 5, 00185  Rome, Italy}
%\affiliation{Institut de Physique de Nice (INPHYNI), Université Côte d’Azur \&  CNRS (UMR 7010), 17 rue Julien Lauprêtre, 06200, Nice, France}

\author{Frédéric van Wijland}
\affiliation{Laboratoire Mati\`ere et Syst\`emes Complexes (MSC), Université Paris Cité  \& CNRS (UMR 7057), 75013 Paris, France}
\affiliation{Center for Gravitational Physics and Quantum Information, Yukawa Institute for Theoretical Physics, Kyoto University, Kitashirakawa Oiwakecho, Sakyo-ku, Kyoto 606-8502, Japan}

\author{Hisao Hayakawa}
\affiliation{Center for Gravitational Physics and Quantum Information, Yukawa Institute for Theoretical Physics, Kyoto University, Kitashirakawa Oiwakecho, Sakyo-ku, Kyoto 606-8502, Japan}

\begin{abstract}
The Mpemba effect, a counterintuitive phenomenon where a hotter system relaxes faster than a colder one, has been widely observed in various nonequilibrium systems. 
Despite this progress, the fundamental structural features of the energy landscape required for its emergence remain a subject of debate. 
In this study, we investigate the conditions for the Mpemba effect within one-dimensional overdamped Langevin dynamics. 
We classify the potential landscapes based on the presence of single or double wells, their symmetry properties, and the existence of walls.
We establish that the existence of the effect is primarily driven by the presence of boundaries, either hard or soft, rather than the specific internal structure of the potential landscape, such as metastability or the number of minima. 
By employing a spectral decomposition of the Fokker-Planck operator, we analyze the behavior of the first nontrivial eigenmode and demonstrate that its derivative acts as a Dirac delta peak in the low-temperature regime. This helps us elucidate the mechanism underlying the Mpemba effect: it appears as the interplay between this behavior and the initial population dynamics in a non-trivial way induced by the presence of the wall.
Our analysis provides a unified classification across single- and double-well potentials, highlighting the crucial role of boundary conditions and asymmetry. 
Furthermore, we demonstrate that this framework allows for the engineering of potential landscapes capable of producing multistage Mpemba transitions.
\end{abstract}

\maketitle

%%%%%%%%%%
\section{Introduction}

The Mpemba effect, the counterintuitive phenomenon in which a hotter system relaxes faster than a colder one under identical conditions, has attracted sustained attention since its modern experimental report by Mpemba and Osborne~\cite{mpemba1969cool}. 
Despite early skepticism~\cite{burridge2016questioning,katz2017reply}, it is now widely recognized as a genuine nonequilibrium phenomenon that can arise in a broad class of systems, ranging from classical fluids to stochastic and quantum systems~\cite{Bechhoefer2021,Teza25,Ares25_review}.

Recent theoretical progress has clarified that the Mpemba effect is a fundamental property of relaxation dynamics toward equilibrium (or a steady state), and can be understood in terms of the spectral decomposition of the underlying evolution operator~\cite{Lu17,Klich19}. 
In Markovian systems obeying detailed balance, the relaxation is governed by a hierarchy of eigenmodes of the Fokker--Planck operator, and the Mpemba effect emerges when the projection of the initial condition onto the slowest nontrivial mode depends non-monotonically on the initial temperature~\cite{Lu17,Busiello21}. 
This mechanism has been confirmed in a variety of contexts, including colloidal systems~\cite{kumar2020,kumar2021,Kumar22,Malhotra24,Tian2025}, granular fluids with thermostats~\cite{Santos17,Torrente19,Biswas20,Biswas21,Mompo20,Megias22b,Patron23}, 
optical resonators~\cite{Keller18,Santos20,Patron21}, and inertial suspensions~\cite{Takada21a}. 
The effect has also been identified in a variety of classical stochastic models, including Markovian dynamics~\cite{Lu17, Klich19, Busiello21,Lin22,Pal24,Santos24}, overdamped Langevin systems~\cite{Biswas23a,Biswas_thesis,chetrite2021metastable,walker2021anomalous,Deguenther22}, and non-Markovian models~\cite{Yang22,Strachan25} as well as in several other classical nonequilibrium settings~\cite{Greaney11,Baity-Jesi19,Gonzalez2021,Yang20,Yang22,Gonzalez21,Holtzman22,Megias22a,Biswas23,Pemartin23,Antonov26}. 
See also some general arguments for understanding the condition for observing the Mpemba effect~\cite{Ohga2024,Vu2025}. 

Remarkably, a recent trend is to study the quantum Mpemba effect~\cite{Ares25_review,Strachan25,Nava19,Carollo21,Manikandan21,Ivander23, Ares23,Chatterjee_2023, Chatterjee24,Joshi2024,Shapira2024,Rylands2024,Moroder24,Chalas24,Ares24,Yamashika2024,Liu2024,Wang2024,Nava2024,Longhi2024,Zhang2025,Turkeshi2025,Bao2025,Yu2025,Beato2026,Yamashika2026}, which may be relevant to quantum information processing and thermalization control.
These developments highlight that the Mpemba effect is not restricted to specific microscopic mechanisms, but rather reflects a general feature of nonequilibrium relaxation processes.

%%%%%%%%%%%%%%%%%%%%%%
\begin{figure*}[t]
    \centering
    \includegraphics[width=1.0\textwidth]{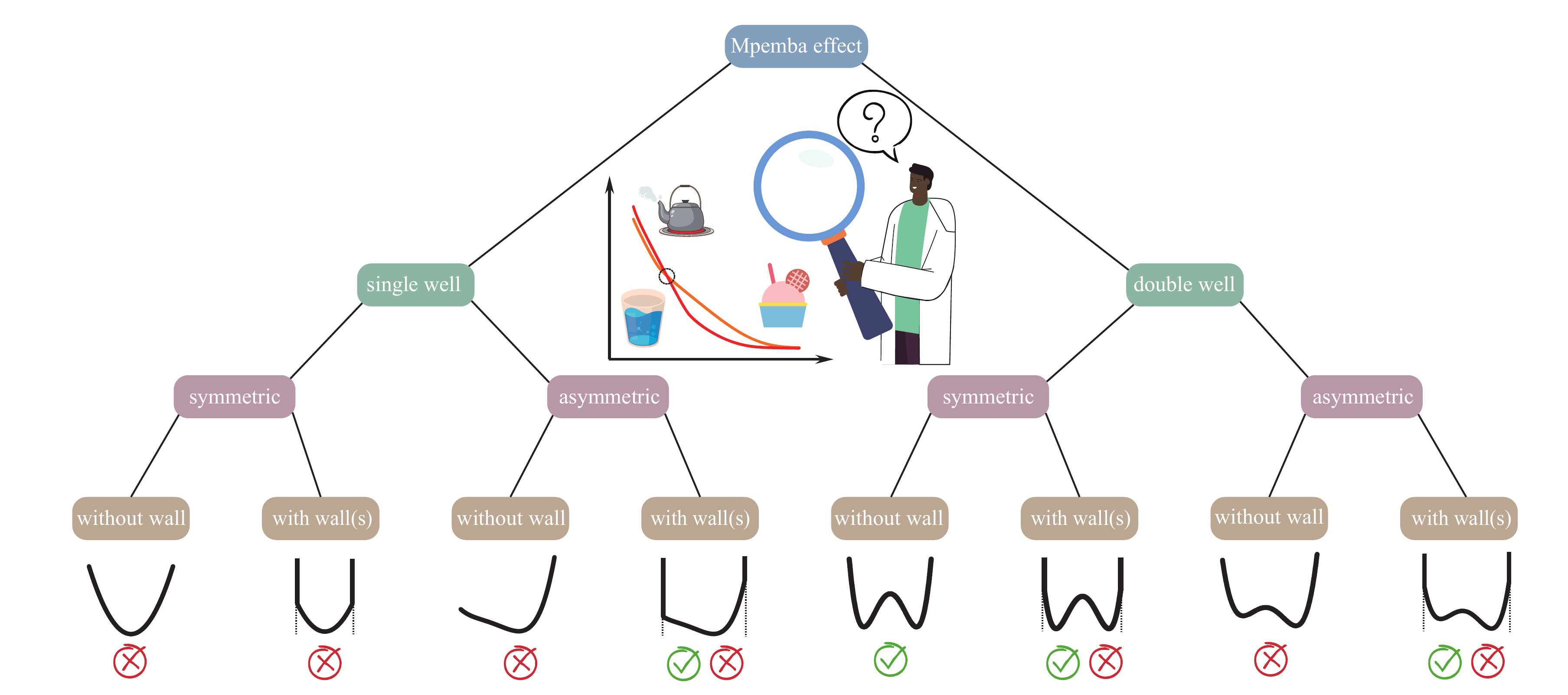}
    \caption{Summary of the conditions for observing the Mpemba effect across various potential landscapes. The decision tree categorizes one-dimensional potentials based on the number of wells (single well vs. double well), their parity (symmetric vs. asymmetric), and the boundary conditions (without wall vs. with wall(s)). 
    The green checkmark and red cross denote the guaranteed presence and absolute absence of the Mpemba effect, respectively. 
    A simultaneous checkmark and cross mean that the existence of the effect is conditionally dependent on the position of the wall(s).
    }
    \label{fig:big_picture}
\end{figure*}
%%%%%%%%%%%%%%%%%%%%%%%%%

Despite significant progress, several fundamental questions remain open. In particular, it remains unclear which structural features of the energy landscape are essential to the emergence of the Mpemba effect. 
While early studies emphasized the role of multiple minima and energy barriers, motivated by the original freezing problem, more recent works have suggested that metastability alone is not sufficient and that more subtle aspects of the dynamics must be taken into account~\cite{Klich19,Holtzman22}. 
Moreover, single-well potentials, which do not possess the traditional features of metastability, have been shown to exhibit the Mpemba effect~\cite{Biswas23a,Biswas_thesis}. 
Therefore, a comprehensive understanding of the Mpemba effect that goes beyond the traditional picture of barrier crossing and metastability is still lacking.

A complementary perspective has been developed in recent works, including our companion paper~\cite{Yue26}, which argues that the presence of boundaries (or ``walls'') plays a decisive role in triggering the Mpemba effect. 
In particular, it was shown that even in simple polynomial potentials, the Mpemba effect can arise due to the interplay between the spatial confinement and the temperature dependence of the probability distribution, rather than the detailed structure of local minima. 
This insight challenges the conventional intuition that associates the effect primarily with double-well or metastable systems.
%See also the two-dimensional counterpart~\cite{Hayakawa2026}.

% In parallel, studies of low-dimensional model systems have provided valuable analytical insights into the mechanism of anomalous relaxation. 
In particular, bistable potentials have served as a minimal setting to explore the interplay between barrier crossing, spectral properties, and initial conditions~\cite{Chatterjee_2023,Chatterjee24}. 
Hayakawa and Takada~\cite{Hayakawa2026} demonstrated that even in two-dimensional bistable systems, the Mpemba effect can exhibit rich behavior depending on the geometry of the potential landscape and the choice of initial ensemble.

In this paper, we build on these developments and investigate the conditions under which the Mpemba effect arises in overdamped Langevin dynamics in one dimension. 
While existing literature examines both single- and double-well potentials, we argue that the specific internal structure of the potential, whether flat, linear, or characterized by nonzero curvature, is not of primary significance.

%The existing literature considers different footing potentials that exhibit one well or two wells. 
%We argue that the primary significance does not lie in the internal structure of the potential, regardless of whether the wells are flat, linear, or characterized by nonzero curvature.

Central to our analysis is the concept of a wall, defined as a rigid boundary external to the potential center. We further extend this framework by introducing soft walls, which arise when the asymptotic growth of the potential at infinity exceeds that predicted by naive analytic continuation. 
We investigate the interplay between these hard and soft walls and the potential's internal structure. 
By analyzing the spectral properties of the Fokker-Planck operator and the behavior of the first nontrivial eigenmode, we elucidate the mechanisms driving the Mpemba effect and derive general conditions for its emergence.

Figure~\ref{fig:big_picture} summarizes our findings.
We provide a comprehensive classification of one-dimensional potentials based on the number of wells (single versus double), parity (symmetric versus asymmetric), and boundary conditions.  
For single-well systems, the Mpemba effect necessitates an asymmetric potential in the presence of at least one wall. Double-well systems exhibit more nuanced behavior: the effect emerges not only in asymmetric potentials with at least one wall but also in symmetric potentials.
Our analysis indicates that the presence of a wall is generally a crucial prerequisite for the Mpemba effect, independent of the potential's internal structure. 
This unified framework reconciles previously disparate observations across both single- and double-well systems. 
Notably, Fig.~\ref{fig:big_picture} highlights potentials where the occurrence of the Mpemba effect is sensitive to the precise location of the wall; this sensitivity is discussed in subsequent sections.

It should be noted that most papers on the quantum Mpemba~\cite{Ares25_review,Strachan25,Nava19,Carollo21,Manikandan21,Ivander23, Ares23,Chatterjee_2023, Chatterjee24,Joshi2024,Shapira2024,Rylands2024,Moroder24,Chalas24,Ares24,Yamashika2024,Liu2024,Wang2024,Nava2024,Longhi2024,Zhang2025,Turkeshi2025,Bao2025,Beato2026,Yamashika2026}, and some classical papers~\cite{Santos17,Torrente19,Biswas20,Biswas21,Mompo20,Megias22b,Patron23,Santos20,Patron21,Takada21a,Biswas_thesis,Deguenther22,Gonzalez2021,Megias22a,Biswas23} use differences in relaxation rates under nonequilibrium initial conditions to observe crossings of time-dependent observables.
However, this paper focuses on the late-stage dynamics, starting from two equilibrium initial conditions, and examines crossings of observables.

The remainder of this paper is organized as follows. In Sec.~\ref{SecII}, we not only introduce the model and the spectral framework for analyzing relaxation dynamics but also recall the mathematical definition of the Mpemba effect. In Sec.~\ref{sec:single}, we investigate single-well potentials and demonstrate the crucial role of walls, both hard and soft. In Sec.~\ref{sec:double}, we extend the analysis to double-well potentials and show how the analysis naturally connects to that of the single-well case. 
We also show how to engineer a potential able to produce multistage Mpemba effects (i.e., multiple Mpemba temperatures). 
Both Secs.~\ref{sec:single} and~\ref{sec:double} are based on numerical results and analytical results with walls located sufficiently far away from the center of the potential. 
%To bypass this analytic limitation, in Sec.~\ref{sec:piecewise}, we analyze a piecewise quadratic potential, which is an exactly solvable double-well potential.
Finally, we summarize our findings and discuss open questions in Sec.~\ref{sec:conclusion}.
We present several appendices to explain the details and assist the main text.

%%%%%%%%%%%%%%%%%%%%%%
\section{Overdamped Langevin dynamics in one dimension}\label{SecII}
%%%%%%%%%%%%%%%%%%%%%%%%

\subsection{The Fokker-Planck evolution operator and its spectral properties}\label{sec:FP}
%%%%%%%%%%%%

In our study, we consider an overdamped Brownian particle moving in a one-dimensional potential $V(x)$ with unit mobility, which is in contact with a thermal bath at temperature $T$. The dynamics of the position $x(t)$ of the particle is described by the Langevin equation
\begin{equation}\label{eq:Langevin}
    \frac{\dd x}{\dd t}=-V'(x(t))+\sqrt{2T}\eta(t),
\end{equation}
where $V'(x):=\dd V(x)/\dd x\,$, and $\eta(t)$ is a Gaussian white noise with zero mean and correlations $\langle \eta(t)\eta(t')\rangle=\delta(t-t')$. The probability density function of the particle's position, denoted as $p(x,t)$, evolves according to the Fokker-Planck equation
\begin{equation}
    \p_t p(x,t)={\mathbb W}p(x,t),
\label{eq:FP}
\end{equation}
where the evolution (Fokker-Planck) operator $\mathbb{W}$ is defined as
\begin{align}\label{op_W}
    \mathbb{W}\cdot :=\p_x(V'(x)\cdot)+T\p_x^2 \cdot .
\end{align}
The operator $\mathbb{W}$ in Eq.~\eqref{op_W} has non-positive eigenvalues $\{-\lambda_n\}_{n\geq 1}$, which satisfy $0= \lambda_1<\lambda_2<\cdots$, and we denote by $\{r_n(x)\}_{n\geq 1}$ the corresponding right eigenvectors of $ \mathbb{W}$. 
Since this problem can be mapped onto the one-dimensional Schr\"{o}dinger equation, there is no degeneracy of the eigenvalues.
Note that the zero eigenmode $r_1(x)$ corresponds to the equilibrium state~\cite{risken1996fokker}.
Given an initial distribution $p(x,0)$, the solution of the Fokker-Planck equation reads
\begin{equation}
    p(x,t)=\sum_{n\ge 1}\ee^{-\lambda_n t}a_n r_n(x),
\end{equation}
where $r_n(x)$ is the $n$th eigenvector, and the coefficients $a_n$ are determined by the initial condition $p(x,0)$
\begin{equation}\label{def:a_n}
    a_n:=\int \dd x\, \ell_n(x)p(x,0),
\end{equation}
where $\ell_n(x)$ is the eigenfunction of the adjoint operator $\mathbb{W}^\dagger$ with the same eigenvalues $\{-\lambda_n\}_{n\geq 1}$.
The value of $\ell_{n}(x)$ represents the contribution of the state $x$ to the $n$th mode of relaxation, and it is crucial for understanding the Mpemba effect.
The dynamics in Eq.~\eqref{eq:Langevin} respects the detailed balance conditions with respect to the Boltzmann distribution $\pi(x,\beta):=\ee^{-\beta V(x)}/Z(\beta)$ with $Z(\beta):=\int \dd x\, \ee^{-\beta V(x)}$ and $\beta=1/T$ (the Boltzmann constant $k_\text{B}$ is set to unity). 
It is well known that Eq.~\eqref{eq:FP} can be mapped onto the Schr\"{o}dinger equation~\cite{risken1996fokker} by means of the Darboux transformation \begin{equation}\label{def:g(x,t)}
g(x,t):=\ee^{\beta V(x)/2}p(x,t) ,
\end{equation}
where the Fokker-Planck equation Eq.~\eqref{eq:FP} can be rewritten as
\begin{equation}\label{Schrodinger}
\partial_t g(x,t)%=[T^2\partial_x^2-U_{\text{eff}}(x)]g(x,t)
=-\mathbb{H} g(x,t) ,
\end{equation}
where $\mathbb{H}:= -T{\dd^2}/{\dd x^2}+U_\mathrm{eff}(x)$, and we have introduced the effective potential
\begin{equation}\label{U(x)}
U_{\text{eff}}(x):=\frac{1}{4T}V'(x)^2-\frac{1}{2}V''(x).
\end{equation}
The form Eq.~\eqref{Schrodinger} allows a direct application of the Sturm-Liouville theory:
\begin{align}\label{phi_r_ell}
\varphi_n(x,\beta)=r_n(x)/\sqrt{\pi(x,\beta)}=\ell_n(x)\sqrt{\pi(x,\beta)} ,
\end{align}
has exactly $n-1$ zeroes, where $\varphi_n$ is the $n$th eigenfunction of $\mathbb{H}$ and it also means that the eigenvectors $r_{m}$ and $\ell_n$ satisfy the orthonormality condition
\begin{equation}\label{orthonormal}
    \int \dd x\, \ell_n(x) r_{m}(x)=\delta_{mn} .
\end{equation}
In most of this work, the particle shall be confined to the region $-L _- \leq x \leq L_+$, where $x=\pm L_\pm$ are identified as the positions of the hard walls. 
%However, since the notion of the wall is central in our work, we would like to take some time to clarify what we mean throughout.
However, for our discussion, we shall introduce two types of walls. 
One is a hard wall located at $x=\pm L_\pm$, where the particle has zero probability of being found outside the wall (there is a zero-flux condition at the wall; inside the wall, the repulsive potential is infinite).  
The other one is a soft wall, in which the particle can exist even outside $[-L_-,L_+]$, where the slope of the potential outside $x<-L_-$ and $x>L_+$ is steeper than the bulk region $-L_-\le x\le L_+$. 
In our numerical simulations for a system confined by soft walls, we use a finite box $[-L_\mathrm{BL},L_\mathrm{BR}]$, where $L_\mathrm{BR}$ ($L_\mathrm{BL}$) is larger than $L_+$ ($L_-$).
For most of this work, the potential $V(x)$ is assumed to take a polynomial form that grows near $x=\pm L_\pm$ as
\begin{align}\label{asymptotic_V(x)}
    V(x)\sim x^m,
\end{align}
where $m$ is an even positive integer. This asymptotic behavior holds for $|x|$ large enough, and it can be probed by the particle provided that the walls at $\pm L_\pm$ are sufficiently far away.

A key finding of Ref.~\cite{Yue26} is that the observation of the Mpemba effect generally requires a boundary located on the shallow side of the double-well potential.
In this paper, we discuss certain exceptions and refinements to this condition. 
Before examining how such a boundary influences the presence of the Mpemba effect, it is necessary to establish a rigorous definition of the effect itself.

\subsection{Definition of the Mpemba effect}

In this paper, we focus on the Mpemba effect after a quench from an equilibrium state $p(x,0)=\pi(x,\beta_i)$ to another equilibrium state with a target inverse temperature $\beta$.
%, though many recent works, especially in the quantum realm, consider nonequilibrium initial states. 
The Mpemba effect considers the dependence on the initial temperature of the relaxation process, and on the counterintuitive fact that, for some potentials, the relaxation toward the target bath temperature can be faster starting from a higher initial temperature than from a lower one (as long as the quench is fast enough~\cite{brey1990}, the Langevin equation Eq.~\eqref{eq:Langevin} remains valid for a time-dependent temperature). 
In the long-time limit, the probability density at a time $t$ after the quench reads
\begin{align}\label{eq:pxt}
    p(x,t)&=\pi(x,\beta)+\ee^{-\lambda_2 t}a_2(\beta_{i},\beta) r_2(x) \notag\\
 &\quad    +\ee^{-\lambda_3 t}a_3(\beta_{i},\beta) r_3(x)%\notag\\&\quad  
 + {O}\left(\ee^{-\lambda_4 t}\right) ,
\end{align}
which is valid for $t\gg\lambda_4^{-1}$. The coefficients $a_n$, introduced in Eq.~\eqref{def:a_n}, are computed using $p(x,0)=\pi(x,\beta_i)$. 
Note that the dependence of $p(x,t)$ on $\beta_i$ and $\beta$, and that of $\lambda_n$ on $\beta$, are not explicitly written in Eq.~\eqref{eq:pxt} and later discussions. 
Unless it vanishes for some reason or unless it is otherwise small, the $a_2$ term is the leading correction to the target equilibrium distribution. 
We can thus quantify the relaxation process by the distance of the instantaneous probability distribution $p(x,t)$ to the target equilibrium state $\pi(x,\beta)$.

A standard measure  of how $p(x,t)$ differs from $\pi(x,\beta)$ is the Kullback-Leibler (KL) divergence~\cite{Lu17,Sagawa}, as a typical monotone measure, defined as 
\begin{equation}
    D_{\text{KL}}(p \| \pi) := \int \dd x\, p(x,t) \ln \left[\frac{p(x,t)}{\pi(x,\beta)}\right]. 
\end{equation}
Alternatively, the $L^1$ distance between $p(x,t)$ and $\pi(x,\beta)$ has also been used~\cite{kumar2020}.
See also a more general argument for monotone measures based on the thermomajorization theory~\cite{Vu2025}. 
Since the eigenvalue $\lambda_{n}$ and eigenfunction $r_n(x)$ are independent of the initial state, the distance to the target state is entirely determined by the overlap coefficient $a_n(\beta_{i},\beta)$. 
For the KL divergence, expanding the logarithm to the second order yields the asymptotic behavior
\begin{align}
    D_{\text{KL}}(p \| \pi) &\simeq \int \dd x\, \frac{[p(x,t) - \pi(x,\beta)]^2}{2\pi(x,\beta)}\notag \\
    &\simeq \ee^{-2\lambda_2 t} \frac{a_2(\beta_{i},\beta)^2}{2} \int \dd x\, \frac{r_2(x)^2}{\pi(x,\beta)}.
\end{align}
%If the $L^{1}$ distance is used, its long-time decay is proportional to $|a_2(\beta_{\text{i}})| \ee^{-\lambda_2 t}$.  
%Therefore, a necessary condition to observe the Mpemba effect is the non-monotonicity of $a_2$ against $\beta_{i}$.

Consider two cooling processes starting from $\beta_{i}^{\text{h}}$ and $\beta_{i}^{\text{c}}$ with $\beta_{i}^{\text{h}}<\beta_{i}^{\text{c}}<\beta$. 
Both intuitively and quantitatively, the colder system is initially closer to the target. Following~\cite{Lu17}, we define the Mpemba effect by the condition $|a_2(\beta_{i}^{\text{h}},\beta)| < |a_{2}(\beta_{i}^{\text{c}},\beta)|$. When this condition is met, the hot state $\beta_{i}^{\text{h}}$ yields a smaller amplitude $a_2$ for the KL divergence (or $L^1$ distance), resulting in a counterintuitively shorter distance to the equilibrium distribution $\pi(x,\beta)$ at long times.
Similarly, for two heating processes, we can also define the inverse Mpemba effect.
Thus, a necessary condition for observing the Mpemba effect (including the inverse one) is that the overlap coefficient $a_2$ must exhibit a non-monotonic behavior as a function of the initial temperature $\beta_{i}$. 
In other words, there must exist an inverse temperature $\beta_{\mathcal{M}}$ such that $(\partial a_2/\partial \beta_{i})|_{\beta_i=\beta_\mathcal{M}}=0$~\cite{Lu17}. Such a temperature will hereafter be referred to as the Mpemba temperature, as it signals the transition to the existence of the Mpemba effect.

Interestingly, it turns out that the derivative $\p a_2/\p \beta_i$ has a thermodynamic interpretation
\begin{align}\label{eq:derivativea2}
    \frac{\p a_2}{\p\beta_i}&=%\text{Cov}_{\beta_i}(\ell_2(x),V(x))
   % \notag\\ &=-
    -\langle\ell_2(x)(V(x)-U_i)\rangle_{\beta_i},
\end{align}
where $U_i:=\langle V(x)\rangle_{\beta_i}$ is the internal energy at inverse temperature $\beta_i$, and where $\langle\ldots\rangle_{\beta_i}$ refers to an equilibrium average  with respect to $\pi(x,\beta_i)$. 
As a mathematical curiosity, note that Eq.~\eqref{eq:derivativea2} also expresses that $\ell_2(x)$ and $V(x)$ are uncorrelated with the measure $\pi(x,\beta_{\mathcal{M}})$.
The condition Eq.~\eqref{eq:derivativea2}, which is at the core of our study, demands a fine understanding of $\ell_2(x)$, which we now investigate in some detail.

\subsection{The first excited state}\label{subsec:firstexcited}
Because the Fokker-Planck operator $\mathbb{W}$ is a Markov evolution operator, we know that $\lambda_1=0$ and that $\ell_1(x)=1$. 
In addition, thanks to the mapping in Eq.~\eqref{Schrodinger}, the Sturm-Liouville theory applies and the eigenvectors $\ell_n(x)$ and $r_n(x)$ possess exactly $n-1$ nodes. 
Specifically, $\ell_2(x)$ possesses exactly one zero at some location $x_0$. We now show that $\ell_2(x)$ is monotonic.
By convention, we choose $\ell_2(x)$ positive for $x<x_0$ and negative for $x>x_0$. 
From the definition of $\ell_2(x)$, $\mathbb{W}^\dagger\ell_2(x)=-\lambda_2\ell_2(x)$, we have
\begin{equation}\label{eq:defl2}
T\p_x^2\ell_2(x)-V'(x)\p_x\ell_2(x)=-\lambda_2\ell_2(x) .
\end{equation}
Multiplying both sides by $\ee^{-\beta V(x)}$ and integrating over $x$, we obtain
\begin{equation}
  T \ee^{-\beta V(x)} \p_x\ell_2(x) =-\lambda_2\int_{-L_-}^x\dd y\,\ell_2(y)\ee^{-\beta V(y)} ,
\end{equation}
which holds for a hard wall.
For a soft wall, $L_-$ should be replaced with $\infty$.\footnote{For simplicity, most of the descriptions are based on systems confined by hard walls, except for Sec.~\ref{sec:soft_wall}.}
For $x<x_0$, the right-hand side is negative, it reaches a (negative) minimum at $x_0$, and increases monotonically after $x>x_0$. 
Since for $x=L_+$ for a hard wall (and for $x\to \infty$ for a soft wall), it must vanish due to the orthogonality of $\ell_1(x)$ and $\ell_2(x)$ with the Boltzmann weight, then the integral on the right-hand side must remain negative and $\ell_2(x)$ is decreasing for all $x$. 
Unlike $\ell_1(x)$ and $r_1(x)$, few generic properties of $\ell_2(x)$ and $r_2(x)$ are available. 
When we discuss the single- and double-well potentials, we can characterize $\ell_2(x)$ in more detail.

For instance, one generic feature that we shall establish later is that in the low bath temperature regime, as long as the potential grows faster than quadratically, and irrespective of its internal shape, 
$\ell_2(x)$ has a narrow transient region between two saturated domains, and it reduces to a step function in the $T\to 0$ limit. 
%Although the specifics of the temperature dependence of the width $\delta(T)$ are strongly potential-dependent. 
We defer the formal explanation to Secs.~\ref{sec:single} and~\ref{sec:double}, but it is not hard to understand why $\ell_2(x)$ saturates to some constant values at infinity.
%near the boundaries $x\simeq \pm L_\pm$ in the limit $L_\pm\to \infty$. 
The argument only involves the asymptotics of the potential.

Indeed, if the potential grows as in Eq.~\eqref{asymptotic_V(x)} near the walls ($m$ is even), the dominant contribution in Eq.~\eqref{eq:defl2} near the walls should be the second term on the left-hand side, which satisfies
\begin{equation}
- V'(x)\,\partial_x \ell_2(x) \simeq 0 .
\end{equation}
This naively suggests that $\ell_2(x)$ approaches a constant as $|x|\simeq L_\pm ( \gg 1)$.

However, this argument breaks down for a quadratic potential ($m=2$). 
In that case, we have the scaling behaviors $V'(x) \sim x$ and $\partial_x \ell_2 \sim \ell_2 / x$, implying $V'(x)\partial_x \ell_2(x) \sim \ell_2(x)$. 
Therefore, the statement that $\ell_2(x)$ becomes constant at infinity is no longer valid. Instead, $\ell_2(x)$ exhibits an algebraic tail.
The $m=2$ case will be discussed separately (see Appendix~\ref {app:m2} for details). 

For $m>2$, this reasoning can be made more precise. 
In particular, we estimate the asymptotic rate of convergence by retaining both the terms $V'(x)\partial_x \ell_2$ and $\lambda_2 \ell_2(x)$, while neglecting the diffusion term $T \partial_x^2 \ell_2(x)$, under the assumption that
\begin{equation}
|T \partial_x^2 \ell_2| \ll |V'(x)\partial_x \ell_2|, \, |\lambda_2 \ell_2|
\quad \text{as } x \to \pm L_\pm.
\end{equation}
We can rewrite Eq.~\eqref{eq:defl2} as
\begin{align}
   \ln \frac{\ell_2(+\infty)}{\ell_2(x)} &\simeq \lambda_2\int_x^{\infty}\frac{\dd y}{V'(y)}\simeq\frac{\lambda_2}{m(m-2)} x^{2-m} \label{l_2(L_+)},
\end{align}
for $m>2$ in the limit $L_\pm\to \infty$. 
This leads to
\begin{equation}\label{ell_2(x)}
    \ell_2(x )\simeq C_{+\infty}e^{-O(x^{2-m})}\simeq C_{+\infty}+O(x^{2-m}), 
\end{equation}
for $x\gg 1$ if $L_+ \to \infty$.
Similar asymptotics can be derived on the left side. %, where we have introduced $C_{+\infty}=\ell_2(x\to\infty)$. 
%Our preceding analysis assumes the absence of hard walls. Reintroducing these boundaries remains valid, provided they are placed sufficiently far from the origin to justify the large-$x$ asymptotic analysis. 
If the walls are positioned too close to the origin, the system may fail to reach the constant asymptotics required for the $\ell_2(x)$ mode. 
Consequently, while our analytical derivations assume the limit of distant walls, our numerical simulations explore the full range of $L_\pm$ configurations.

\begin{comment}
In the above reasoning, we have worked in the absence of hard walls. 
If these are restored, the above reasoning remains valid if they are first taken to be far away from the origin, and then the large $x$ analysis can be performed. 
If the walls are not located sufficiently far away, then it is possible that the constant asymptotics for $\ell_2$ is not reached. 
In our analytic arguments, we shall always consider the walls to be sufficiently far away, though our numerics shall explore all regimes of $L_\pm$.
\end{comment}

The region connecting the asymptotic saturated values, which is characterized by the width $w(T)$, will be explored separately for the single- and double-well potentials. However, it is natural to assume that $w(T)$ vanishes as $T\to 0$, which, for a very cold bath, means that in practice, $\p_x\ell_2(x)$ is a downward delta peak somewhere in between the asymptotic branches of the potential. This will guide the qualitative understanding that we present in the next subsection.

%%%%%%%%
\subsection{Tracking the Mpemba effect: a qualitative argument}
%%%%%%%%%%

A necessary condition for observing the Mpemba effect is the existence of an initial temperature $T_i=T_{\mathcal M}$ such that $a_2$ reaches an extremum at that temperature. To develop our qualitative argument, we find it convenient to work in the low-bath-temperature regime, for which we (temporarily) accept that $-\p_x\ell_2(x)\simeq\delta(x-x^*)$.
For single-well potentials, $x^*$ is located at the minimum of the potential. 
In contrast, for double-well potentials, $x^*$ is the saddle point between the two wells (we could choose to use $x^*$ as the origin of coordinates, but it could mislead the reader into thinking $x^*$ is a special symmetry point, which in general it is not).
Again, we shall assume that if there is a hard wall at $-L_-$, the probability $\pi(x,\beta_i)$ vanishes for $x<-L_-$. 
We introduce the cumulative distribution function $\Pi(x,\beta_i)=\int_{-L_-}^{x}\dd y\, \pi(y,\beta_i)$, and we perform an integration by parts in Eq.~\eqref{def:a_n} to obtain
\begin{align}
    a_2&= \int_{-L_-}^{L_+}\dd x\,\ell_2(x)\pi(x,\beta_i) \notag\\
    &=\left[\ell_2(x)\Pi(x,\beta_i)\right]_{-L_-}^{L_+}+\int_{-L_-}^{L_+}\dd x\,\Pi(x,\beta_i)(-\p_x\ell_2) \notag\\
    &\to C_{+\infty}+\Pi(x^*,\beta_i) \qquad (\mathrm{as}\quad  L_\pm\to \infty).
\end{align}
Here, we have used Eq.~\eqref{ell_2(x)}; that is, we assume that if there is a hard wall to the right, it is sufficiently far away so that $\ell_2(x)$ has reached a constant value to the right. 
Since $C_{+\infty}$ does not depend on $\beta_i$, the properties of $a_2$ as a function of $T_i$ are encapsulated in those of the cumulative probability to the left of $x^*$, namely $\Pi(x^*,\beta_i)$, as a function of $T_i$. 

When we consider the initial low temperature as $T_i \to 0$, the initial equilibrium distribution $\pi(x, \beta_i)$ effectively reduces to a Dirac delta function centered at the minimum of the potential. 
As $T_i$ increases, the distribution broadens and becomes biased toward the side where the potential grows more slowly.
If the local asymmetry of $V(x)$ around $x^*$ favors the region $x < x^*$, the cumulative probability $\Pi(x^*,\beta_i)$ increases, and thus $a_2$ increases.
If we further increase the initial temperature $T_i$, $\pi(x, \beta_i)$ samples broader regions of the potential landscape. 
If the potential remains softer for $x < x^*$ at higher energies, $a_2$ continues to rise. 
Conversely, if the potential growth becomes steeper on the left than it is on the right at higher energy scales, and the population shifts toward the right of $x^*$. 
This causes a non-monotonic response in $a_2(\beta_i,\beta)$, which constitutes the fundamental mechanism for the emergence of the Mpemba effect.
The wall, either hard or soft, is precisely the ingredient that can induce such a reversal at a higher initial temperature. It modifies the effective growth of the potential on one side of $x^*$, thereby redirecting the population and turning the temperature dependence of $a_2$ from monotonic to non-monotonic.

\begin{comment}
If we choose a low $T_i$, then the probability is essentially a delta peak around the minimum of the potential. 
Increasing the initial temperature leads $\pi(x,\beta_i)$ to spread out, preferentially towards whichever side has the softer potential growth in the vicinity of $x^*$. 
If this softer side is to the left of $x^*$, then $a_2$ increases. 
Upon further increasing the temperature, the probability $\pi$ will start adjusting, beyond the local minimum, to the branches of the potential. 
If the left of $x^*$ remains softer, then $a_2$ keeps increasing, but if, for some reason, the potential to the left becomes steeper than to the right, the Boltzmann population will leak to the right of $x^*$ and $\Pi(x^*, \beta_i)$, and thus $a_2$ will then decrease. That is the root of the existence of the Mpemba effect.
\end{comment}

As is obvious from our qualitative argument, a key ingredient that we rely on is the property that at low temperature, $-\p_x\ell_2(x)$ is essentially a delta peak, which we expect to smoothen and broaden as the bath temperature itself is increased. 
In the regime in which the bath temperature is not low, although we lose our analytic grip, the same mechanism can still be at play.
The emergence of the Mpemba effect is still governed by the competition between the temperature-dependent population bias and the boundary-induced reversal of this bias induced by the wall.
%This is when not only numerical simulations in Secs.~\ref{sec:single} and~\ref{sec:double}, but also exact results are most welcome to address the Mpemba effect beyond the cold bath limit.

%%%%%%%%%%
\subsection{Symmetric potential}\label{subsec:generalsymmetric}
%%%%%%%%%%%

When the potential is symmetric around some abscissa $x^*$, we choose the origin of coordinates at $x^*$ in this section only, thus effectively setting  $x^{*}=0$. The equation for $\ell_2(x)$ also admits $\ell_2(-x)$ as a solution, hence $\ell_2(x)=\pm\ell_2(-x)$ (due to normalization) and the minus sign is selected because $\ell_2$ is monotonic. As a result, $a_{2}=0$ for any symmetric potential, because $\pi(x,\beta_i)$ is a symmetric function. 
Therefore, the relaxation is controlled by the next nonzero $a_n$ coefficient in Eq.~\eqref{eq:pxt}, namely $a_{3}$. 
As we will now show, the coefficient $a_3$ for such a symmetric potential can be expressed in terms of the coefficient $a_2$ of a companion potential obtained by inserting a hard wall at the symmetry abscissa. We now explain that connection.

%%%%%%%%%%%%%%%%%%%%%%%%%%%
\begin{figure}[t]
\begin{center}
\includegraphics[width=0.45\textwidth]{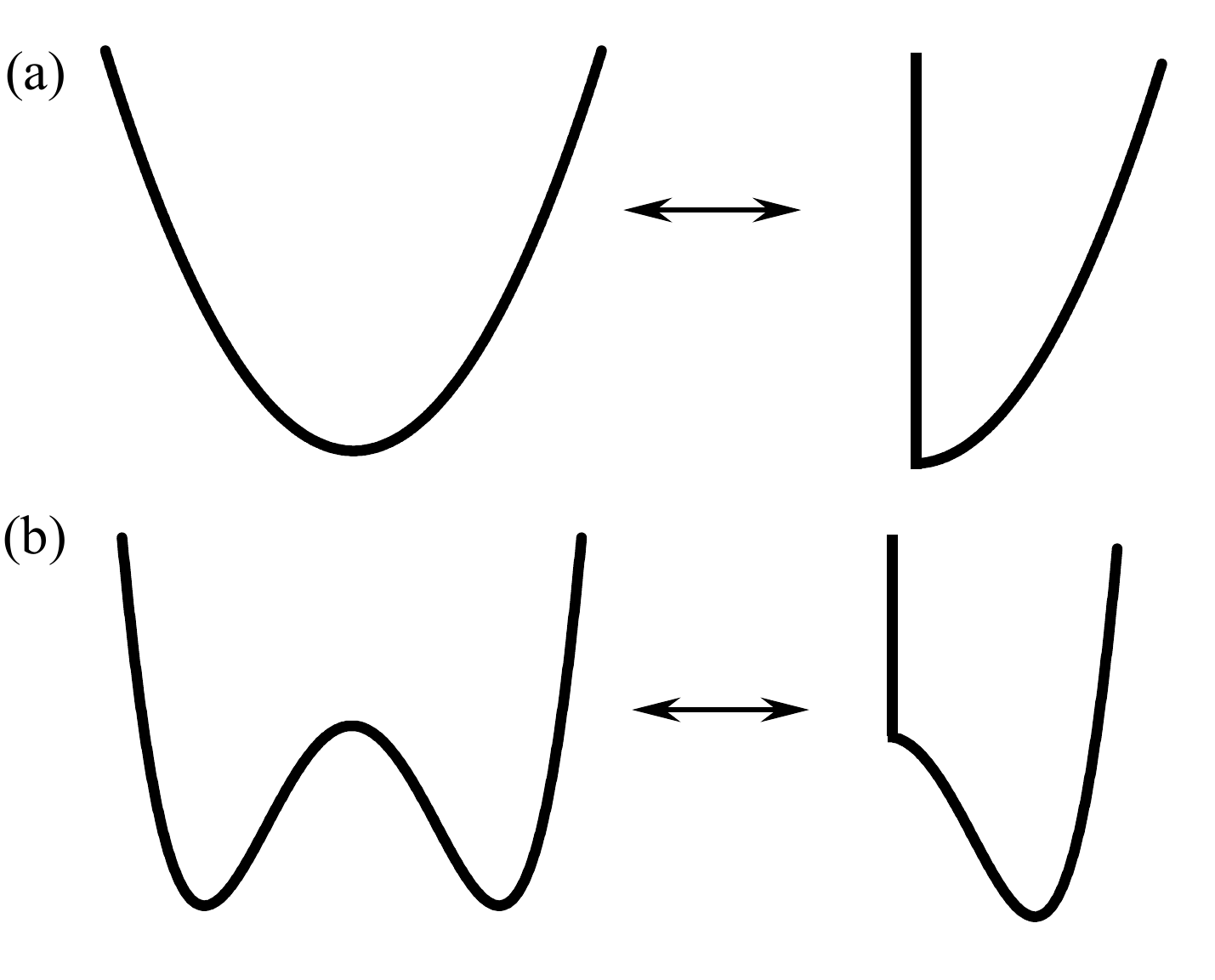}
\caption{(a) A symmetric single-well potential is equivalent to the half-space problem of the same potential with a hard wall placed at the origin ($x=0$). (b) The symmetric double-well potential is equivalent to an asymmetric single-well potential with a wall at the center of the symmetric potential.}
\label{fig:sym_potential}
\end{center}
\end{figure}

We consider an even potential, which is smooth enough (at least \(C^{1}\)) around \(x=0\). 
We now introduce a hard wall at the center of the symmetric potential and another one at \(x=L_+>0\), as illustrated in Fig.~\ref{fig:sym_potential}. 
The new potential coincides with the original one for \(0\le x\le L_+\). 
The zero-flux condition at \(x=0\) leads to the Neumann boundary condition \(\ell_n'(0^+)=0\).

Since the parity operator commutes with the adjoint of the Fokker--Planck evolution operator, each eigenfunction \(\ell_n(x)\) may be chosen to have definite parity, namely, it is either even or odd. 
The parity is, then, fixed by the number of zeros. 
Indeed, \(\ell_n(x)\) has exactly \(n-1\) zeros. 
For an even eigenfunction, all zeros occur in symmetric pairs \(\pm x_j\), so the total number of zeros must be even. By contrast, an odd eigenfunction necessarily vanishes at \(x=0\), and all remaining zeros occur in symmetric pairs, so the total number of zeros must be odd. 
Therefore, \(\ell_n(x)\) is even for odd \(n\) and odd for even \(n\).

Consequently, the even-parity eigenfunctions, namely those with odd index, automatically satisfy the Neumann condition at \(x=0\), because the derivative of an even function vanishes at the origin. Hence, they can be used to construct the eigenfunctions corresponding to the companion potential with a hard wall at \(x=0\). 
For the original symmetric potential, \(\ell_{2n-1}(x)\) has exactly \(2n-2\) zeros for \(n\ge1\). 
Since \(\ell_{2n-1}(x)\) is even, these zeros form \(n-1\) symmetric pairs \(\pm x_j\), so exactly \(n-1\) of them lie on the positive half-axis \(x>0\). 
Therefore, the restriction of \(\ell_{2n-1}(x)\) to \(x\ge0\) satisfies the adjoint equation for the half-space problem, obeys the Neumann boundary condition at \(x=0\), and has exactly \(n-1\) zeros on \(x>0\). 
By rewriting the adjoint eigenvalue equation in Sturm–Liouville form, one can apply the Sturm oscillation theorem, which states that the $n$th eigenfunction has exactly $n-1$ zeros in the open interval. We use this property for both the full symmetric problem and the companion half-space problem.
By this theorem, it is thus the \(n\)-the left eigenvector \(\ell_n^{\mathrm{half}}(x)\) of \(\mathbb{W}\) in the single-well potential embedded in the half-space \(x\ge0\). 
Hence we may choose
\begin{equation}
\ell_n^{\mathrm{half}}(x)=\ell_{2n-1}(x),\qquad x\ge0.
\end{equation}

The normalization condition is also fulfilled, since
\begin{equation}
\begin{split}
&
\int_0^{L_+}\dd x \left[\ell_n^{\mathrm{half}}(x)\right]^2 \pi^{\mathrm{half}}(x,\beta) \notag\\
&=
2\int_0^{L_+}\dd x \left[\ell_{2n-1}(x)\right]^2 \pi(x,\beta)\notag\\
&=
\int_{-L_+}^{L_+}\dd x \left[\ell_{2n-1}(x)\right]^2 \pi(x,\beta)
=1,
\end{split}
\end{equation}
where we have used \(\pi^{\mathrm{half}}(x,\beta)=2\pi(x,\beta)\) for \(x>0\). 
Therefore, the overlap coefficient for the original potential satisfies
\begin{equation}
a_{2n-1}=a_n^{\mathrm{half}},
\end{equation}
which shows that the overlap coefficient for the half-space potential is identical to that of the full symmetric potential. 
The study of the Mpemba effect in a symmetric potential, and in particular of the coefficient \(a_3\), is therefore reduced to the study of the more conventional Mpemba effect associated with \(a_2^{\mathrm{half}}\) in the companion potential obtained by inserting a hard wall at the origin.

%%%%%%%%%%%%%%%%
\section{Single well}\label{sec:single}
%%%%%%%%%%%%%%%%
\subsection{Width of $\ell_2(x)$}\label{subsec:l2single}
%%%%%%%%%%%%%%%%%

For a single-well potential $V (x)$ with a minimum at $x = x^*$, the behavior of $\ell_2(x)$ around $x=x^*$ can be explored in greater detail, and it depends on how $V(x)$ behaves at $x^*$. 
Note that we adopt the general argument, which can cover the case $x^*\ne 0$. 
We assume that for $x\to x^*$, $V (x)\sim x^{m'}$, with an even integer $m'$ ($m'=2$ is the standard case of a locally harmonic minimum). 
This $m'$, of course, can be different from $m$ in Eq.~\eqref{asymptotic_V(x)}.
We have already explained that $\ell_2(x)$ is saturated for $x\to \pm L_\pm$ with sufficiently large $L_\pm$, if $V(x)$ grows faster than quadratically at infinity. 
When the bath temperature is low, we shall show that, in addition, $\ell_2(x)$ takes the form of a step function
whose width $w(T)$ approaches zero in the limit $T\to 0$.
To see this, using rescaled variables, we rewrite Eq.~\eqref{eq:defl2} as
\begin{equation}
\hat{\ell}_2(z) = \ell_2(w z),\,\,\hat{\lambda}_2=w^2T^{-1}\lambda_2,
\end{equation}
which leads to
\begin{equation}\label{scaled_ell_2}
Tw^{-2}\p_z^2\hat{\ell}_2-V'(zw)w^{-1
}\p_z\hat{\ell}_2=-Tw^{-2}\hat{\lambda}_2\hat{\ell}_2 .
\end{equation}
If we assume $w(T)$ does not approach zero in the $T\to 0$ limit, only the second term on the left side survives.
This leads to a constant $\ell_2(x)$, which contradicts the existence of one node from the Sturm-Liouville theory; hence, we must have that $w(T)\to 0$ as $T\to 0$. 
For $w\to 0$ in Eq.~\eqref{scaled_ell_2}, $w\sim T^{1/m'}$ is the only possible choice to remove the singularities in the limit $T\to 0$. 
In this case, Eq.~\eqref{scaled_ell_2} reduces to
\begin{equation}\label{reduced_ell_2}
\p_z^2\hat{\ell}_2-{m'} z^{m'-1}\p_z\hat{\ell}_2=-\hat{\lambda}_2\hat{\ell}_2 .
\end{equation}
The solution of Eq.~\eqref{reduced_ell_2} is given by an orthogonal polynomial, such as the first Hermite polynomial for $m'=2$ (a linear function around $x=0$), and some other polynomial for $m'\geq 4$ obtained by an appropriate use of the Rodrigues formula~\cite{TynMyint-U-1978}. 
We have now established that $\ell_2(x\to\pm\infty)=C_{\pm\infty}$ and that the transition region connecting the two asymptotic plateaus has a width $w(T)$ that vanishes as $T\to0$. 
These results directly lead to $-\p_x\ell_2(x)\simeq\delta(x-x^*)$ in our setting in the limit $T\to 0$. We now exploit the consequences of this result for the analysis of the behavior of $a_2$.

\subsection{Single-well potential between two hard walls}\label{subsec:derivationsinglewall}

For the discussion of this subsection, we introduce two hard walls at $x=-L_-$ and $x=L_+$, which confine the particle to the region $-L_-<x<L_+$. 
Following the argument in Ref.~\cite{Yue26}, where we discuss the double-well potential, we now show how $\beta_{\mathcal{M}}$, defined as $\partial a_2/\partial \beta_i|_{\beta_i=\beta_{\mathcal{M}}}=0$, depends on the positions of the walls, and how the Mpemba effect disappears when the walls are pushed to infinity.

According to Eq.~\eqref{eq:derivativea2}, $\beta_{\mathcal{M}}$ is determined by
\begin{equation}\label{eq:definitionT_M}
    \langle \ell_{2}(x)V(x)\rangle_{\beta_\mathcal{M}}-\langle \ell_{2}(x)\rangle_{\beta_\mathcal{M}}\langle V(x)\rangle_{\beta_\mathcal{M}}=0.
\end{equation}
When the bath temperature is low, the corresponding $\ell_2$ is a step function $\ell_2(x)\simeq\theta(x^*-x)$, hence we have
\begin{equation}\label{U_-p_-}
    U_{-}p_{-}-p_{-}(U_{-}p_{-}+U_{+}p_{+})=0,
\end{equation}
where we have introduced
\begin{equation}\label{eq:pandE}\begin{split}
    U_{\pm}:=&\pm\frac{1}{p_{\pm}}\int_{x^*}^{\pm L_\pm}\dd x\, V(x)\pi(x,\beta_{i}),\\
    p_{\pm}:=&\pm\int_{x^*}^{\pm L_\pm}\dd x\, \pi(x,\beta_{i}).
\end{split}\end{equation}
Consequently, the Mpemba temperature can be found by solving
\begin{equation}\label{U_+=U_-}
    U_{-}=U_{+}.
\end{equation}
We first restrict our analysis to a polynomial potential between the two walls $-L_-<x<L_+$. This excludes potentials that are non-analytic (we focus on potentials that can be written as a series expansion valid everywhere), such as the piecewise harmonic potential, or the potential that exhibits a power law with non-integer $m$. 
This case will be commented upon further down. 
Regardless of the internal structure of the potential (one well, two wells, etc.), by an appropriate choice of the $v_k$'s, such a polynomial can be written in a unique way as
\begin{equation}\label{potential_form}
    V(x)=v_m(x-x^*)^{m}+\sum_{k=0}^{m-1}v_{k}(x-x^*)^{k} 
\end{equation}
for any $x^*$. 
For an asymmetric potential, $m$ must be an even integer larger than two.
For sufficiently large $L_\pm$, the potential remains confining between the walls, scaling as $x^m$ for large $x$.
With such an expression for $V(x)$ as in Eq.~\eqref{potential_form} we see that
\begin{equation}
        (x-x^*)V'(x)=mV(x)-\sum_{k=0}^{m-1}(m-k)v_{k}(x-x^*)^{k}.
\end{equation}

%\textcolor{red}{How can we treat non-integer $n$?}\textcolor{orange}{YL:$n$ should be even. For Odd $n$, the system is unstable. For non-integer $n$, negative halves are not defined for many cases, and the system is also unstable or has singularities.}

Let us go back to Eq.~\eqref{U_-p_-} with Eqs.~\eqref{eq:pandE} and~\eqref{potential_form}.
By using the integration by parts, %\textcolor{red}{As stated later, there is a big gap to introduce $L_\pm$.} 
we can get
\begin{equation}\label{eq:stephard}
    \begin{split}
        Z(\beta_i)p_{-}=&(L_{-}+x^*)\ee^{-\beta_{i} V(-L_-)}+\beta_{i}mU_{-}Z(\beta_i)p_{-}\\&-\beta_{i}\sum_{k=0}^{m-1}(m-k)v_{k}\int_{-L_-}^{x^*}\dd x\,(x-x^*)^{k}\ee^{-\beta_{i} V(x)},\\
        Z(\beta_i)p_{+}=&(L_{+}-x^*)\ee^{-\beta_{i} V(L_+)}+\beta_{i}mU_{+}Z(\beta_i)p_{+}\\&-\beta_{i}\sum_{k=0}^{m-1}(m-k)v_{k}\int_{L_+}^{x^*}\dd x\,(x-x^*)^{k}\ee^{-\beta_{i} V(x)}.
    \end{split}
\end{equation}
%\textcolor{red}{You have suddenly introduced (the location of walls) $L_\pm$. What is its definition? 
%Indeed, Eq.~\eqref{eq:pandE} considers the system from $x=-\infty$ to $\infty$.}
Thus, we obtain
\begin{equation}\label{eq:Uhardwall}
    \begin{split}
        U_{-}=&\frac{1}{m\beta_{i}}+\frac{\sum_{k=0}^{m-1}(m-k)v_{k}\int_{-L_-}^{x^*}\dd x\,(x-x^*)^{k}\ee^{-\beta_{i} V(x)}}{m Z(\beta_i)p_{-}}\\
        &-\frac{(L_{-}+x^*)\ee^{-\beta_i V(-L_-)}}{m\beta_i Z(\beta_i)p_{-}},\\
        U_{+}=&\frac{1}{m\beta_{i}}+\frac{\sum_{k=0}^{m-1}(m-k)v_{k}\int_{x^*}^{L_{+}}\dd x\,(x-x^*)^{k}\ee^{-\beta_{i} V(x)}}{m Z(\beta_i)p_{+}}\\
        &-\frac{(L_{+}-x^*)\ee^{-\beta_i V(L_+)}}{m\beta_i Z(\beta_i)p_{+}}.
    \end{split}
\end{equation}
For large enough $L_\pm$, the condition Eq.~\eqref{U_+=U_-} leads the temperature $\beta_{\mathcal{M}}$ to satisfy
\begin{equation}
    \frac{L_-\ee^{-\beta_{{\mathcal M}} V(-L_-)}}{m\beta_{{\mathcal M}}p_{-}}-\frac{L_+\ee^{-\beta_{{\mathcal M}} V(L_+)}}{m\beta_{{\mathcal M}}p_{+}}
    =\sum_{k=0}^{m-1}\frac{m-k}{m}v_{k}\Delta\langle x^{k}\rangle,
\end{equation}
where $\Delta\langle x^{k}\rangle$ is defined as
\begin{equation}
    \Delta\langle x^{k}\rangle:=\langle x^{k}\rangle_{+}-\langle x^{k}\rangle_{-}
\end{equation}
with
\begin{align}
   \langle x^{k}\rangle_{+}&:=\frac{\int_{x^*}^{L_+}\dd x\,(x-x^*)^{k}\ee^{-\beta_{{\mathcal M}} V(x)}}{p_{+}} ,\\
   \langle x^{k}\rangle_{-}&:=\frac{\int_{-L_-}^{x^*}\dd x\,(x-x^*)^{k}\ee^{-\beta_{\mathcal M} V(x)}}{p_{-}} .
\end{align}    
Our task now is to determine the order of $\Delta\langle x^{k}\rangle$ and $p_{\pm}$ in the large $L_-$ and $L_+$ limits. We anticipate that the regime of interest is the high-temperature one with small $\beta_{\mathcal M}$, yet with $\beta_{\mathcal M}V(-L_-)\gg 1$. In this limit, we have
\begin{equation}
    \begin{split}
    p_-&\simeq M_{0}-\frac{1}{m\beta_{\mathcal M}L_-^{m-1}}\ee^{-\beta_{\mathcal M} L_-^{m}},\\
    p_+&\simeq M_{0}-\frac{1}{m\beta_{\mathcal M}L_+^{m-1}}\ee^{-\beta_{\mathcal M} L_+^{m}},
    \end{split}
\end{equation}
where $M_{k}=\int_{0}^{\infty}\dd x\,x^{k}\ee^{-\beta_{{\mathcal M}} x^{m}}=\beta_\mathcal{M}^{-(k+1)/m}\Gamma((k+1)/m)/m \propto \beta_{\mathcal M}^{-(k+1)/m}$ where $\Gamma(x)$ refers to the Gamma function. 

The second term of each line on the right-hand side of Eq.~\eqref{eq:Uhardwall} is the contribution from the wall, which is negligible compared to $M_0$ in the large $L_\pm$ limit. 
Similar arguments apply to $\langle x^{k}\rangle_{\pm}$, and we have
 \begin{equation}
    \begin{split}
        &\langle x^{k}\rangle_{+}\sim \frac{M_{k}}{M_{0}},\qquad \langle x^{k}\rangle_{-}\sim (-1)^k\frac{M_{k}}{M_{0}}.
    \end{split}
\end{equation}
which suggests that the odd term survives while the even term vanishes for $\Delta\langle x^{k}\rangle$. For definiteness, we assume that the potential is softer to the left of $x^*$ than to the right (of course, if the reverse is true, the roles of $L_\pm$ are swapped into those of $L_\mp$). Hence, when determining the asymptotics
\begin{equation}
\Delta\langle x^{k}\rangle\sim\begin{cases}
    \beta_{i}^{-k/m}&~\text{for odd}~k,\\
    0&~\text{for even}~k .
\end{cases}
\end{equation}
This means the corresponding proportionality coefficient is assumed to be positive. 
This expression shows that only the highest odd term $k^*$ needs to be considered (this is $k^*=m-1$ if $v_{m-1} \neq 0$). 
Therefore, we have
\begin{equation}\label{eq:moment}
    \frac{L_-\ee^{-\beta_{\mathcal M} L_-^{m}}}{m\beta_{\mathcal M}\beta_{\mathcal M}^{-1/m}}-\frac{L_+\ee^{-\beta_{\mathcal M} L_+^{m}}}{m\beta_{\mathcal M}\beta_{\mathcal M}^{-1/m}}\sim \beta_{\mathcal M}^{-k^*/m},
\end{equation}
which can be rewritten as
\begin{equation}\label{eq:transition_temp}
    \beta_{\mathcal M}\simeq\frac{\ln (\text{const}\,L_-)}{L_-^m}+\frac{1}{L_-^m}\ln\left(1-\frac{L_+}{L_-}\ee^{-\beta_{\mathcal M}(L_+^m-L_-^m)}\right),
\end{equation}
irrespective of whether $k^*=m-1$ or a smaller integer.  
Note that Eq.~\eqref{eq:transition_temp} is an implicit equation for $\beta_{\mathcal M}$. 
Further progress can be made when $1\ll L_-\ll  L_+\to \infty$.
Then, we can ignore the second term on the right-hand side of Eq.~\eqref{eq:transition_temp}, and we arrive at
\begin{equation}\label{eq:transition}
    \beta_{\mathcal M}\propto \frac{\ln L_-}{L_-^m}\,.
\end{equation}
This shows explicitly that $\beta_{\mathcal M}$ vanishes as $L_-\to\infty$, and thus the Mpemba effect disappears without a wall. 
Furthermore, $\beta_{\mathcal M}$ exists if $L_{+}$ is large enough with respect to $L_-$. 
However, if $L_+\leq L_-$, the implicit equation Eq.~\eqref{eq:transition_temp} for $\beta_{\mathcal M}$ does not have a solution anymore, which expresses that the Mpemba effect disappears. 
According to Eq.~\eqref{eq:transition_temp}, if the left wall is located infinitely far away, i.e., $L_-\to \infty$, and the right wall is kept at a finite position, i.e., $L_+<\infty$, the temperature $\beta_{\mathcal M}$ also vanishes. 
%Note that the expression of the Mpemba temperature in the large $L_-$ and $L_+$ limits in the asymmetric single-well potential is the same as that in the asymmetric double-well potential, which suggests the inner structure of the potential plays a minor role in the Mpemba effect.
The results presented in this subsection rely on the assumption that $\ell_2(x)$ is essentially a step function.
%The results presented in this subsection are only valid for a single-well potential increasing faster than quadratically at infinity, since our argument relies on the idea that $\ell_2(x)$ is basically a step function. 
%However, even when $\ell_2(x)$ is not a step function, our results are still applicable to many situations.
The mechanism hinges on whether the population can transfer from the soft branch to the steep branch of the potential as the initial temperature increases, and the wall is precisely the factor that can induce such a reversal at higher initial temperatures.
We now examine how to extend our results to the soft-wall cases. 

% , but in the analytic derivation, we immediately take the $L_-\to+\infty$ limit(in the numerics, however, we are bound to work with a finite $L_-$, and the independence of our predictions on a large enough $L_-$ will have to be demonstrated)

%%%%%%%%%%%%%%%%%%%
\subsection{Softening one of the walls}\label{sec:soft_wall}
%%%%%%%%%%%%%%%%%%

We now replace the hard wall with a soft wall at $x=-L_-$. 
In practice, this means that we now assume that
\begin{equation} 
V(x)\sim(x-x^*)^{q}
\end{equation}
with $q>m$ for $x\leq -L_-$. 
The particle is now allowed to probe the $x<-L_-$ region. 
Scanning through the steps of the hard-wall derivation in subsection~\ref{subsec:derivationsinglewall}, only a few adjustments have to be made. For instance, the counterpart of Eq.~\eqref{eq:stephard} is modified into
\begin{equation}\label{eq:soft}
    Z(\beta_i)p_{\text{tail}} = -(L_- + x^*)e^{-\beta_i V(-L_-)} + \beta_i q Z(\beta_i)p_{\text{tail}} U_{\text{tail}},
\end{equation}
where $p_{\text{tail}}$ and $U_{\text{tail}}$ are defined as\footnote{For theoretical analysis, we regard $L_\mathrm{BL}$ as $\infty$.}
\begin{equation}
    \begin{split}
        p_{\text{tail}}&=\int_{-\infty}^{-L_-}\dd x\,\pi(x,\beta_i),\\
        U_{\text{tail}}&=\frac{1}{p_{\text{tail}}}\int_{-\infty}^{-L_-}\dd x\,V(x)\pi(x,\beta_i).
    \end{split}
\end{equation}
To arrive at Eq.~\eqref{eq:soft}, we have used that $(x-x^*)V'(x) \approx q V(x)$ along with an integration by parts. Thus, we have
\begin{equation}\label{eq:Usoftwall}
    \begin{split}
    U_{-}=&\frac{1}{m\beta_{i}}+\frac{\sum_{k=0}^{m-1}(m-k)v_{k}\int_{-L_-}^{x^*}\dd x\,(x-x^*)^{k}\ee^{-\beta_{i} V(x)}}{m Z(\beta_i)p_{-}}\\
    &-\frac{(L_{-}+x^*)\ee^{-\beta_i V(-L_-)}}{m\beta_i Z(\beta_i)p_{-}}\left(1-\frac{m}{q}\right),
    \end{split}
\end{equation}
where we have used $p_{\text{tail}}/p_{-}\ll 1$. 
The only difference between the expression of $U_{-}$ with a hard wall in Eq.~\eqref{eq:Uhardwall} and the steep power law wall with a potential break is the presence of a factor $(1-m/q)$ in the boundary term. 
As a check, taking $q\to\infty$ reduces to the hard wall result.
As a result, $\beta_{\mathcal M}$ with a steep wall exhibits the same behavior as Eq.~\eqref{eq:transition}. 
This establishes that the Mpemba effect disappears in the absence of walls, even if the wall is soft. 
Note that $q=m$ corresponds to a no-wall situation, where the boundary term in Eq.~\eqref{eq:Usoftwall} vanishes.

\subsection{Symmetric single-well potential}
For the symmetric single-well potential, as explained in subsection~\ref{subsec:generalsymmetric}, we can restrict our analysis to a half potential in which a hard wall has been introduced at the symmetry axis, as shown in Fig.~\ref{fig:sym_potential}(a). 
We use the Fortuin-Kasteleyn-Ginibre inequality~\cite{fortuin1971correlation}, which states that the average of the product of two monotonically increasing observables is larger than the product of their averages. 
To address the case of a symmetric single-well potential, we apply this inequality to $-\ell_2^\text{half}(x)$ and to $V(x)$ on the $x>0$ half-axis, which are indeed increasing functions. 
Hence, the condition in Eq.~\eqref{eq:definitionT_M} can never be fulfilled, and there is thus no possibility of a Mpemba effect.

\begin{figure}[t]
\begin{center}
\includegraphics[width=1\linewidth]{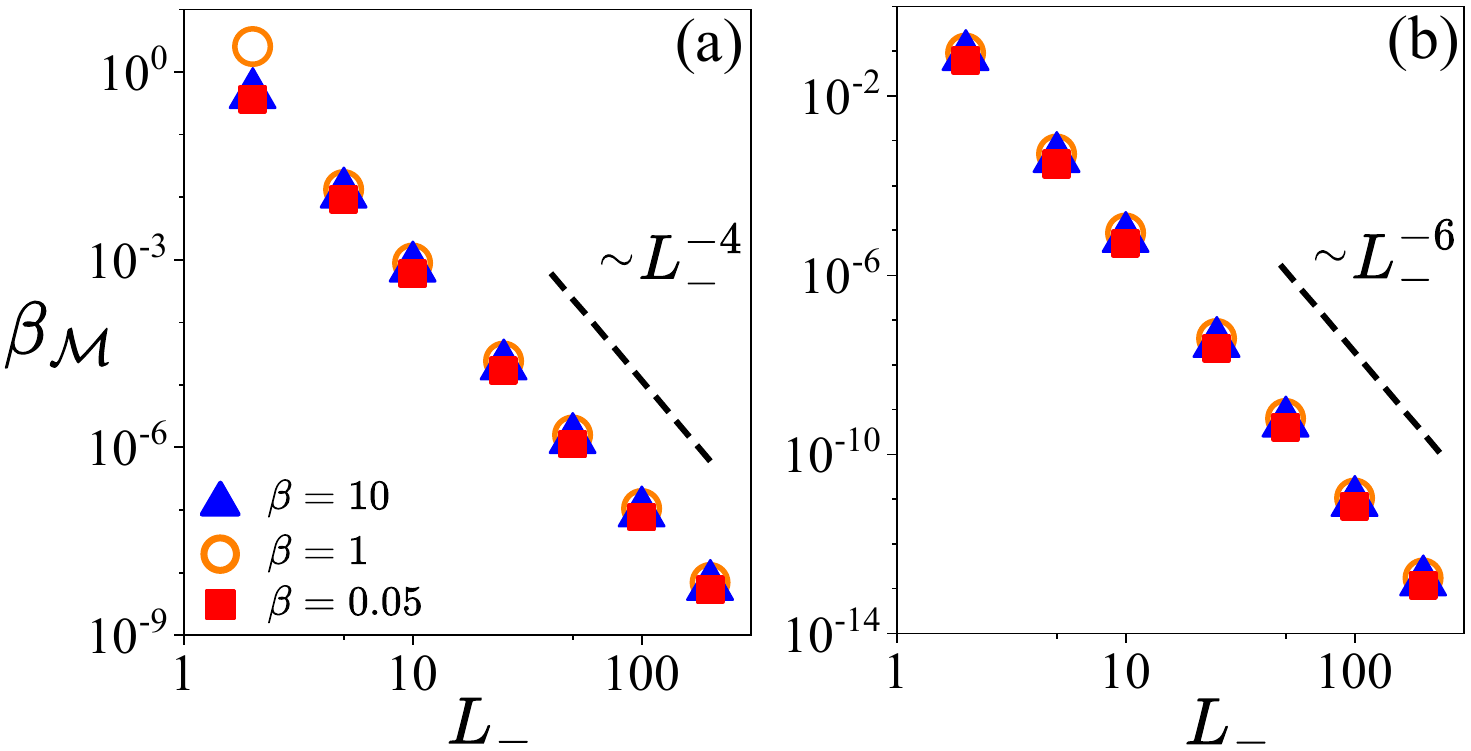}
\caption{ (a)
The inverse temperature $\beta_\mathcal{M}$ is shown as a function of $L_-$ for various bath temperatures in a quartic potential $V(x)\sim x^4$ with a hard wall at $x=-L_-=-L_{\text{BL}}$. (b) Same as (a), but with the quartic potential replaced by a sextic potential $V(x)\sim x^6$. In both (a) and (b), the analytical result, Eq.~\eqref{eq:transition}, obtained for large $\beta$ and $L_-$ is shown as a dashed line as a guide for the eye. Remarkably, no notable deviations are observed, even very far from the regime in which the analytical expression was derived. We have used $L_+=2L_-$ ($L_{\text{BR}}=2L_{\text{BL}}$).}
\label{fig:betaivsbetaandL}
\end{center}
\end{figure}

\subsection{Numerical explorations for a single-well potential}
To explore the Mpemba effect in the single-well potential, we numerically solve the backward Fokker-Planck equation on a finite interval $[-L_{\text{BL}},L_{\text{BR}}]$ to evaluate the overlap coefficient $a_2$ across various bath temperatures. If we are considering two hard walls, $L_{+}$ and $-L_{-}$ are located at $-L_{\text{BL}}$ and $L_{\text{BR}}$. 
If we are considering soft walls, $L_{\text{BL}}$ and $L_{\text{BR}}$ are chosen sufficiently far away, and the independence of our results on the choice of the simulation box is tested. The numerical calculations are performed using the finite-difference method. 
By discretizing the spatial domain $[-L_{\text{BL}},L_{\text{BR}}]$ into an $N = 10^4$ grid points, we obtain an $N \times N$ matrix representation of the evolution operator $\mathbb{W}$. The left eigenvector $\ell_2$ is then extracted by diagonalizing this matrix using the {\tt scipy.linalg} package in Python. Subsequently, the overlap coefficient $a_2$ is calculated via numerical integration of the product of $\ell_2(x)$ and the initial distribution $\pi(x, \beta_i)$. The Mpemba temperature $\beta_{\mathcal M}$ is finally determined by locating the extremum of $a_2$ with respect to $\beta_i$. As an illustrative example, we employ the quartic potential $V(x) = x^4 + x^3 + 0.3x^2$ and the sextic potential $V(x)=x^{6}+x^{5}+0.3x^{4}+0.3x^{2}$, where $x^*=0$ for both potentials.
The right boundary is set to a sufficiently large value $L_+\gg L_-$, which, according to Eqs.~\eqref{eq:transition_temp} and \eqref{eq:transition}, does not affect our numerical results.

Figure~\ref{fig:betaivsbetaandL}(a) plots $\beta_{\mathcal M}$ as a function of the left hard-wall position $-L_-$ for various bath temperatures, ranging from the low- to high-temperature regime. 
As $L_-$ increases, $\beta_{\mathcal M}$ decreases and asymptotically vanishes for sufficiently large $L_-$. 
We also consider a soft-wall scenario, where the potential is replaced with a steeper confinement $V(x) \sim x^6$ for $-L_{\text{BL}}\leq x\leq -L_-$. As depicted in Fig.~\ref{fig:soft_wall}, $\beta_{\mathcal M}$ exhibits the same scaling behavior as the hard-wall case, corroborating our arguments in Sec.~\ref{sec:soft_wall}. 
The results are shown to be independent of the size of the simulation box $L_{\text{BR}}$. 
Furthermore, at high bath temperatures and for a relatively close wall, the Mpemba temperature lies below the bath temperature, a phenomenon known as the inverse Mpemba effect~\cite{Lu17}.

\begin{figure}[t]
\begin{center}
\includegraphics[width=1\linewidth]{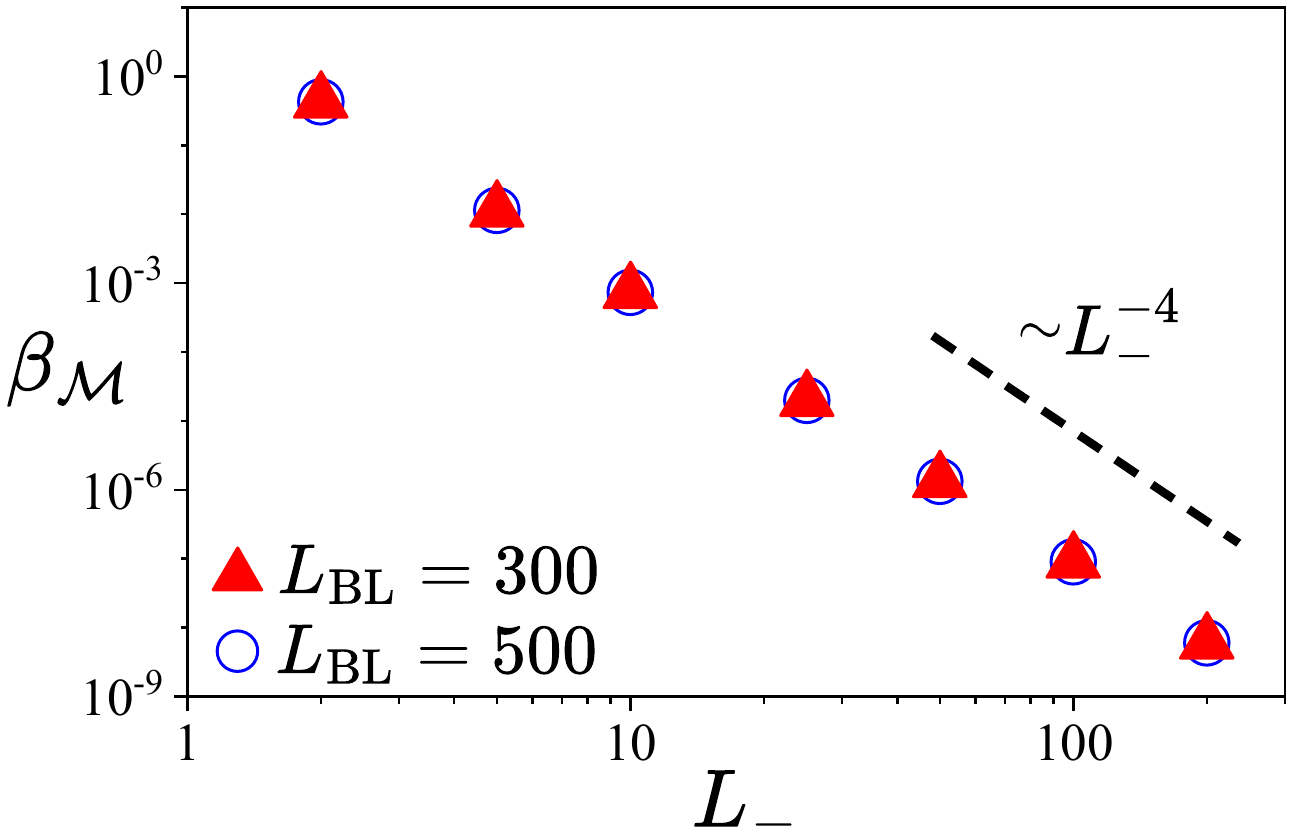}
\caption{Plot of the Mpemba inverse temperature $\beta_{\mathcal M}$ as a function of the position $L_-$ of the soft wall is shown for two different simulation boxes $L_{\text{BL}}=300$ (red solid triangles) and $500$ (blue open circles), and the result is independent of $L_{\text{BL}}$. The dashed line shows excellent agreement with our prediction. We have used $m=4$, a bath temperature $\beta=1$, and the soft wall grows as $V(x)\sim x^6$ ($q=6$) for $-L_\mathrm{BL}<x<-L_-$. Here, we have used $L_+=L_{\text{BR}}=2L_{\text{BL}}$.
}
\label{fig:soft_wall}
\end{center}
\end{figure}

Our results reported here for a single well are almost the same as for a double well in Ref.~\cite{Yue26}.
We also note that our analytical and numerical results agree with Ref.~\cite{kumar2021}. 
%%%%%%%%%%%%%%%%%%%%%%%%%%%%%%
\section{Double well}\label{sec:double}
%%%%%%%%%%%%%%%%%%%%%%

 In this section, we begin by discussing, on general grounds, how the case of a symmetric double-well potential maps onto that of a single-well potential (in subsection~\ref{subsec:symdouble}). 
 Then, in subsection~\ref{sub:symmetric_double}, we use an exactly solvable model (based on a piecewise harmonic potential) to establish, without any approximation, that the normal or inverse Mpemba effects can be observed at the level of $a_3$. 
 We shall also show how this Mpemba effect strongly depends on the presence of confining walls. 
 In subsection~\ref{subsec:asymdouble}, as in~\cite{Yue26}, we explore the low bath temperature limit of $\ell_2$, and we show how the results of the single-well analysis can be used almost directly when the wall is located far away from the center of the potential. 
 In subsection~\ref{sub:proof_absence}, we prove the absence of the Mpemba effect confined in an asymmetric double-well potential with an infinitely large system (this goes beyond~\cite{Yue26} or the previous section because the proof can be used even for $m=2$ in Eq.~\eqref{asymptotic_V(x)}). 
 We also prove the existence of the Mpemba effect when the system is confined between two walls in subsection ~\ref{sub:boundary-Mpemba}.
 This is also a general proof regardless of the detailed properties of the system. Finally, in subsection~\ref{sec:asym_double_well}, we use our qualitative understanding to reverse-engineer a potential that harbors multistage Mpemba effects if we choose a specific potential form.

%In this section, we begin by showing in \ref{subsec:symdouble} that for a symmetric potential, the Mpemba effect can be observed at the level of $a_3$, and under appropriate conditions. Then we show in \ref{subsec:asymdouble} that, in spite of the many differences between the single-well and the double-well potentials, much of the single-well analysis exports to the double-well setting. 
%We then exhibit in \ref{subsec:harmonicdouble} a double potential for which most of the reasoning can be carried out analytically, which allows us to waive some hypotheses (such as the low bath temperature). 
%Finally, in \ref{subsec:multiple} we show how to engineer a potential whose shape leads to multiple Mpemba transition temperatures, thereby confirming our overall physical picture.

%%%%%%%%
\subsection{General features of the symmetric double-well potential}\label{subsec:symdouble}
%%%%%

Previous studies~\cite{kumar2021} have discussed the Mpemba effect in symmetric double-well potentials. 
Interestingly, Ref.~\cite{kumar2020} suggested that to observe the Mpemba effect, asymmetric potentials are needed, whereas Ref.~\cite{kumar2021} predicted its existence in an infinitely extended system without boundaries. 

We now resolve this puzzle. 
As we have discussed in subsection~\ref{subsec:generalsymmetric}, the leading nontrivial contribution arises from the coefficient $a_3$ since $a_2$ vanishes. 
As illustrated in Fig.~\ref{fig:sym_potential}(b), as far as the eigenfunctions $\ell_n$ are concerned, the symmetric double-well potential can be mapped onto an effective asymmetric single-well potential with a wall at the symmetry axis. 
At this stage, we exploit the results we have obtained for a single well, and we immediately conclude that $a_3$ must display non-monotonic behavior and that the Mpemba effect must occur according to the same mechanism as in the single-well case. 
Of course, if two symmetric hard walls are introduced, then the Mpemba effect may disappear if the latter is too close to the well location.

%%%%%%%%%%%
\subsection{Piecewise symmetric harmonic potential}\label{sub:symmetric_double}
%%%%%%%%%%%%%%
Here we use a specific form of the potential that allows us to carry out the calculations exactly, regardless of the initial and bath temperatures. In practice, we define
\begin{equation}\label{V:symmetric}
V(x)=
\begin{cases}
\dfrac{\kappa}{2}(x+1)^2,
& x<-1/2, \\[4pt]
V_\text{M}-\dfrac{\kappa}{2}x^2,
& -1/2\le x\le 1/2, \\[6pt]
\dfrac{\kappa}{2}(x-1)^2,
& x>1/2 ,
\end{cases}
\end{equation}
The continuity of $V$ and $V'$ at $\pm 1/2$ requires $V_\text{M}=\kappa/4$. We have demonstrated in subsection~\ref{subsec:generalsymmetric} and illustrated in Fig.~\ref{fig:sym_potential} that the Mpemba effect in a symmetric double-well potential can be understood through the analysis of a single-well potential in the presence of a hard wall centered at the top of the barrier. 
%Therefore, the Mpemba temperature $\beta_{\mathcal M}$ exhibits the same behavior as that described by Eq.~\eqref{eq:transition} when we analyze the single-well potential. 
This shows that the internal double-well structure and the accompanying metastability are not the primary driver of the Mpemba effect. 
Instead, the presence of a steep wall is essential.

%%%%%%%%%%%%%
While our asymptotic approach, as developed in Sec.~\ref{sec:single}, aided by Appendix~\ref{app:m2}, could be adapted to a potential whose branches grow quadratically (with $m=2$) at infinity, we now perform an exact analysis of the excited states for a single-well potential with a hard wall at $x=0$. 
The set of equations satisfied by $\ell_n^\text{half}$ is
\begin{equation}
    \begin{cases}
    T\p_x^2\ell_n^{\text{half}}-\kappa(x-1)\p_x\ell_n^{\text{half}}=-\lambda_n\ell_n^\text{half},&x>1/2,\\[6pt]
    T\p_x^2\ell_n^{\text{half}}+\kappa x\p_x\ell_n^{\text{half}}=-\lambda_n\ell_n^\text{half},&0<x<1/2,
    \end{cases}
\end{equation}
where we impose three boundary conditions that are the continuity of $\ell_n^\text{half}$ and $\p_x\ell_n^{\text{half}}$ at $x=1/2$, and $\p_x\ell_n^{\text{half}}(0)=0$ at $x=0$. 
In terms of $\varphi_n(x)=\ell_n^\text{half}(x)\sqrt{\pi^{\text{half}}(x,\beta)}$ the eigenvalue equation becomes
\begin{equation}
\beta^{-1}\frac{\dd^{2}\varphi_{n}(x)}{\dd x^{2}}-U_\mathrm{eff}(x)\varphi_{n}(x)
=-\lambda_{n}\varphi_{n}(x).
\end{equation}
where the effective potential defined in Eq.~\eqref{U(x)} reads
\begin{align}
U_\mathrm{eff}(x)=
\begin{cases}
    \frac{\beta \kappa^{2}}{4}x^{2}+\frac{\kappa}{2}, & 0<x<1/2,\\[6pt]
    \frac{\beta \kappa^{2}}{4}(x-1)^{2}-\frac{\kappa}{2}, & x>1/2.
\end{cases}
\end{align}
The solution to these second-order differential equations, known as the Weber equations, is the special parabolic cylinder functions $D_\nu(x)$ recalled in Appendix~\ref{app:Weber_Function}.  
Introducing the index
\begin{align}
\nu_n:=\frac{\lambda_n}{\kappa},
\end{align}
and requiring the decay of $\varphi_n$ to zero at infinity, we can write the solution in the $x>1/2$ region as
\begin{equation}
   \varphi_n(x)=
    A_{n} D_{\nu_n}(\sqrt{\beta \kappa}(x-1)),
\end{equation}
and in the $0<x<1/2$ region as
\begin{equation}
   \varphi_n(x)= 
    B_{n} D_{\nu_n-1}(\sqrt{\beta \kappa}x)+C_{n}D_{\nu_n-1}(-\sqrt{\beta \kappa}x)\,.
\end{equation}
The vanishing of $\varphi'$ at $x=0$ imposes $B_{n}=C_{n}$, and the continuity of $\varphi_n$ and $\varphi_n'$ at $x=1/2$ leads to a $2\times 2$ linear system for the unknowns $A$ and $B$ which admits nonzero solutions  only if the corresponding determinant vanishes:
\begin{equation}
\frac{D'_{\nu_n}(-\sqrt{\beta \kappa}/2)}{D_{\nu_n}(-\sqrt{\beta \kappa}/2)}
=
\frac{D'_{\nu_n-1}(\sqrt{\beta \kappa}/2) -D'_{\nu_n-1}(-\sqrt{\beta \kappa}/2)}
     {D_{\nu_n-1}(\sqrt{\beta \kappa}/2) + D_{\nu_n-1}(-\sqrt{\beta \kappa}/2)},
\label{eq:matching_condition}
\end{equation}
where the prime refers to $D'_\nu(x):=\p_x D_\nu(x)$. 
This is a transcendental equation for $\nu_n=\lambda_n/\kappa$ that eventually determines a discrete set of eigenvalues $\lambda_n$. Once $\lambda_n$ is obtained, the normalization condition
\begin{equation}
\int_{0}^{\infty} \dd x\, \varphi_n^2(x) = 1.
\end{equation}
fixes the remaining constant, and the overlap coefficient for a single well and a hard wall is eventually obtained as
\begin{equation}
    a_n^\text{half} =\int_{0}^{\infty}\dd x\,\pi^{\text{half}}(x,\beta_{i})  \ell^{\text{half}}_n(x,\beta).
\end{equation}
Using the discussion in subsection~\ref{subsec:generalsymmetric}, we now have access to the overlap coefficients for the full symmetric double-well potential that are given by $a_{2n}=0$ and $a_{2n-1}=a_n^\text{half}$.

%%%%%%%%%%%%%%%%%%%%
\subsubsection{Inverse Mpemba effect}
%%%%%%%%%%%%%%%%

%\subsubsection{Inverse Mpemba effect}

To obtain analytical insight, we consider a special choice of parameter for which the eigenvalue can be found exactly, namely, for which the matching condition in Eq.~\eqref{eq:matching_condition} is automatically fulfilled. It turns out that imposing $\beta \kappa=4$ leads to $\nu_2=1$ and thus to $\lambda_{2}= \kappa$ for the single-well potential with a hard wall at $x=0$. For this particular choice,  the parabolic cylinder functions reduce to elementary functions, and then the solution simplifies considerably. While this is a special point in parameter space, we believe it is as good a representative as any other value.

The corresponding eigenfunction for the full symmetric double-well potential has two nodes and therefore corresponds to the third eigenmode, which has even parity. The corresponding left eigenfunction is given by
\begin{equation}
\ell_{3}(x)=
\begin{cases}
A_2 \sqrt{\frac{\beta \kappa}{2}}(x+1), & x<-1/2,\\[4pt]
B_2 \, \ee^{-\frac{\beta \kappa}{2}x^2}, & |x|\le 1/2,\\[4pt]
- A_2 \sqrt{\frac{\beta \kappa}{2}}(x-1), & x>1/2,
\end{cases}
\end{equation}
The continuity condition at $x=1/2$ yields
\begin{equation}
B_2 = \frac{A_2}{2}\sqrt{\frac{\beta \kappa}{2}} \, \ee^{\beta \kappa/8}.
\end{equation}
Finally, we can determine $A_{2}$ by the normalization condition 
\begin{equation}\label{eq:normalization}
    \int_{-\infty}^{\infty} \dd x\, \pi(x,\beta) \ell_3^2(x) = 1,
\end{equation}
which leads to
\begin{equation}\label{eq:A_3}
    A_{2} = \sqrt{Z(\beta)}\left[-\frac{1}{2\sqrt{\ee}} + \frac{\sqrt{2\pi}}{4} + \frac{\sqrt{2\pi}}{2} \text{erf}\left(\frac{1}{\sqrt{2}}\right)\right]^{-\frac{1}{2}}.
\end{equation}
The overlap coefficient $a_3$ can then be computed analytically as
\begin{equation}\label{eq:a_3}
a_3 =
\frac{A_2 \sqrt{2}}{Z(\beta_i)}
\left[
\frac{1}{2} \ee^{\frac{2-\beta_i \kappa}{4}}
M\!\left(\frac{1}{2},\frac{3}{2},\frac{\beta_i \kappa-4}{8}\right)
-\frac{2}{\beta_i \kappa} \ee^{-\frac{\beta_i \kappa}{8}}
\right],
\end{equation}
where $M(a,b,z)$ is Kummer $M$-function defined as~\cite{Abramowitz}:
\begin{equation}\label{def:M}
    M(a,b,z):= {}_1F_1(a;b;z)
    :=\sum_{k=0}^{\infty}\frac{(a)_k}{(b)_k}\frac{z^k}{k!},
\end{equation}
where $(q)_k:=\Gamma(q+k)/\Gamma(q)$  with $\Gamma(x):=\int_0^\infty \dd t\,  t^{x-1}\ee^{-t}$ is the Pochhammer symbol.
\begin{comment}
, and Kummer's U function defined as
\begin{align}\label{U_connection}
    U(a,b,z)
    :&= \frac{\Gamma(1-b)}{\Gamma(a-b+1)} M(a,b,z) \notag\\
    &\quad + \frac{\Gamma(b-1)}{\Gamma(a)} z^{1-b} M(a-b+1,2-b,z) .
\end{align}
\end{comment}
We also note that $Z(\beta_i)$ can be explicitly written as
\begin{equation}
    Z(\beta_i) = \sqrt{\frac{2\pi}{\beta_i \kappa}} \left[ \text{erfc}\left(-\sqrt{\frac{\beta_i \kappa}{8}}\right) + \ee^{-\frac{\beta_i \kappa}{4}} \text{erfi}\left(\sqrt{\frac{\beta_i \kappa}{8}}\right) \right] ,
\end{equation}
where \(\text{erf}(x):=\frac{2}{\sqrt{\pi}}\int_0^x \dd t\,  \ee^{-t^2}\), $\text{erfi}(x):=\frac{2}{\sqrt{\pi}}\int_0^x \dd t\,  \ee^{t^2}$, and $\text{erfc}(x):=\frac{2}{\sqrt{\pi}}\int_x^\infty \dd t\,  \ee^{-t^2}$.

%\textcolor{red}{Is $\beta$ in Eq.~\eqref{eq:A_3} $\beta_i$?}\textcolor{orange}{YL: $\beta$, because $A_{3}$ is the coefficient of $l_{3}$, which depends on bath temperature.}

This expression explicitly shows that $a_3$ depends non-monotonically on the initial temperature $\beta_i$. This establishes the existence of the inverse Mpemba effect. 

%%%%%%%%%%%%%%%%%

To illustrate our analytical result, we numerically compute the eigenvalues and eigenfunctions of the Fokker-Planck operator using the finite-difference method. We discretize the horizontal axis into $N=10^4$ lattice points. The evolution operator $\mathbb{W}$ can be represented by an $N\times N$ matrix, and the eigenvalues and eigenvectors can be obtained by numerical diagonalization of the matrix.

%%%%%%%%%%%%%%%%%%%%%
\begin{figure}[t]
\begin{center}
\includegraphics[width=1\linewidth]{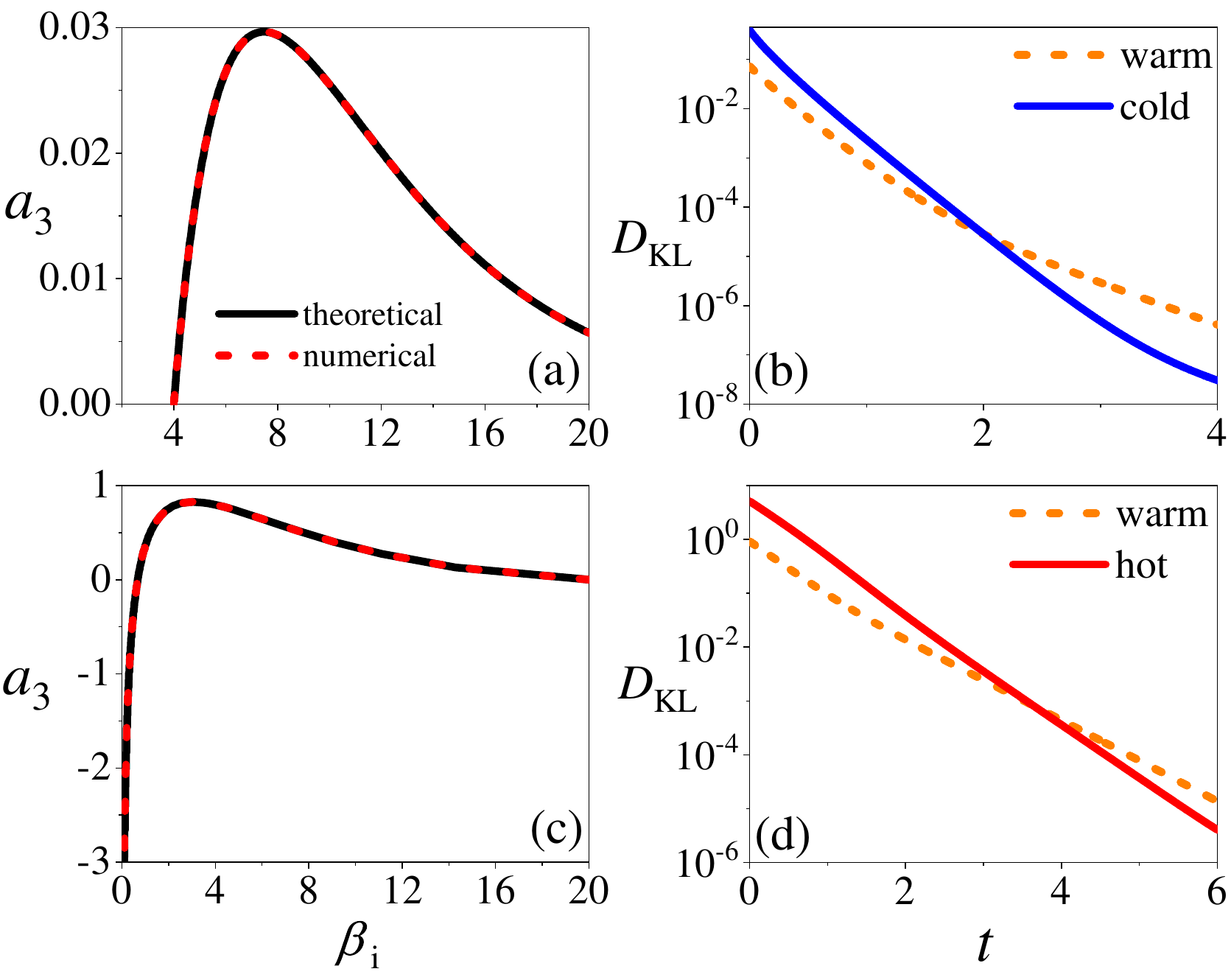}
\caption{(a) The overlap coefficient $a_{3}$ as a function of $\beta_{i}$ for the symmetric double-well potential constructed by a piecewise harmonic potential, which exhibits a non-monotonic behavior, indicating the existence of the (inverse) Mpemba effect. The analytical result (black solid line) is consistent with the numerical result (red dashed line). (b) The time evolution of the KL divergence for cold initial temperature ($\beta_{i}=20$, blue solid line) and warm initial temperature ($\beta_{i}=8$, orange solid line), which cross each other, thereby confirming the (inverse) Mpemba effect. The inverse bath temperature is set to $\beta=4$ for both panels (a) and (b). (c) The same as (a) but $\beta=20$. (d) The time evolution of the KL divergence for hot initial temperature ($\beta_{i}=1$, red solid line) and warm initial temperature ($\beta_{i}=4$, orange solid line), which also cross each other, thereby confirming the Mpemba effect. The inverse bath temperature is set to $\beta=20$ for both panels (c) and (d). For all panels, the simulation box is set to $L_{\text{BL}}=L_{\text{BR}}=100$ and $\kappa=1$.
}
\label{fig:sym_double_well}
\end{center}
\end{figure}
%%%%%%%%%%%%%%%%%%%%

As shown in Fig.~\ref{fig:sym_double_well}(a), the overlap coefficient $a_{3}$ exhibits a non-monotonic behavior with respect to the initial temperature, with a maximum at around $\beta_i \kappa=8$.
This demonstrates the existence of the Mpemba effect.
Moreover, the analytical result is in good agreement with our numerics, as shown in Fig.~\ref{fig:sym_double_well}.
Note, however, that for the specific value we focus on, $\beta\kappa=4$, we only observe an inverse Mpemba effect, i.e., an anomalous relaxation in the heating process.

Then, we numerically integrate the time evolution of the probability density function for different initial temperatures using a finite-difference method, and we use the KL divergence to quantify the distance between the probability density function at time $t$ and the final equilibrium distribution. 
As shown in Fig.~\ref{fig:sym_double_well}(b), we observe that the KL divergence for the cold initial condition crosses the KL divergence for the warm initial condition, which confirms the existence of the Mpemba effect. 
The exactly solvable potential we considered here demonstrates, without approximation or asymptotics, the existence of the Mpemba effect in the symmetric double-well potential.

\begin{comment}
Since we cannot use Eq.~\eqref{beta_k=4} to probe a quench to a lower bath temperature, the corresponding solution is expressed in a complicated form, not like Eq.~\eqref{eq:a_3}, corresponding to the 2D case~\cite{Hayakawa2026}. 
In this case, it is possible to prove the absence of $\beta_\mathcal{M}$ in an infinitely large system similar to an analytic potential satisfying Eq.~\eqref{asymptotic_V(x)} with $m>2$.
%\textcolor{red}{To Yue: Please write your argument without the condition Eq.~\eqref{beta_k=4} for a quench into a lower temperature.}
\end{comment}

%%%%%%%%%%%%
\begin{figure}[t]
\begin{center}
\includegraphics[width=1\linewidth]{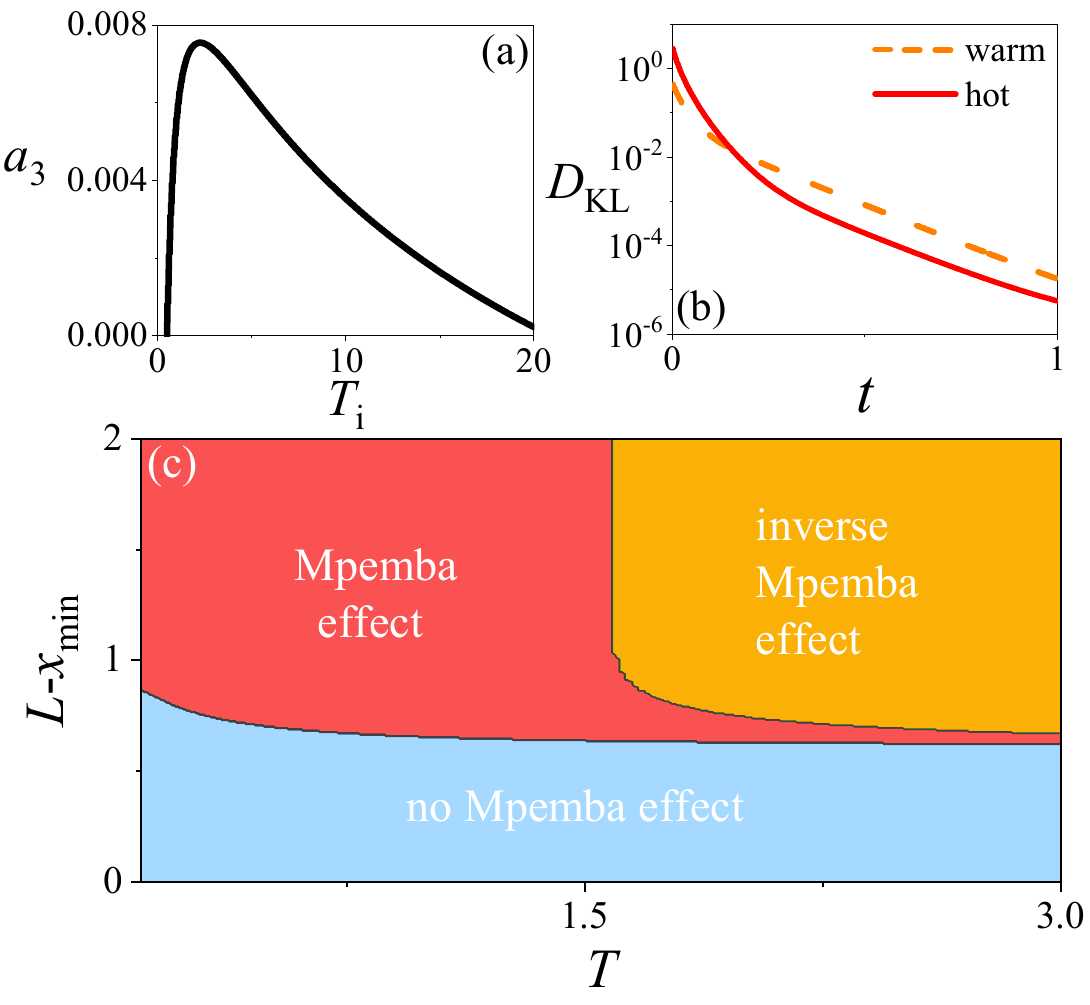}
\caption{The Mpemba effect in the symmetric double-well potential $V(x)=(x^{2}-1)^{2}$. (a) The overlap coefficient $a_{3}$ as a function of initial temperature $T_{i}$. 
%\textcolor{red}{To Yue: Figure plots $a_3$ against the bath temperature $T$. Please correct the sentence or the figure.}
The observed non-monotonic behavior establishes the existence of the Mpemba effect. (b) Time evolution of the KL divergence for a hot initial condition ($T_{i}=4$, red solid line) and a warm initial condition ($T_{i}=10$, orange solid line). The crossing of the two curves dynamically confirms the anomaly. For panels (a) and (b), the bath temperature is fixed at $T=0.5$. (c) Phase diagram for the system confined by symmetric walls at $\pm L$, plotted against the bath temperature $T$ and the relative wall distance $L - x_{\text{min}}$ (where $x_{\text{min}}=-1$ is the potential minimum). The colored regions delineate the parameter regimes harboring the normal Mpemba effect (red), the inverse Mpemba effect (orange), and no Mpemba effect (blue). The simulation box is set to $L_{\text{BL}}=L_{\text{BR}}=100$ for (a) and (b), and $L_{\text{BL}}=L_{\text{BR}}=L$ for (c). 
}
\label{fig:sym_double_well_x4}
\end{center}
\end{figure}
%%%%%%%%%%%%

%%%%%%%%%%%%%%%%%%%
\subsubsection{Mpemba effect for a quench to a lower temperature}
%%%%%%%%%%%%%%%

%\subsubsection{Mpemba effect for a quench into a lower temperature}

If we abandon the condition $\beta \kappa = 4$ and work at an arbitrary bath temperature,  no closed-form solution is readily available, and the eigenvalue problem in Eq.~\eqref{eq:matching_condition} must be solved numerically.
In that case, we numerically evaluate the overlap coefficient $a_3$ using the general expressions derived in subsection~\ref{sub:symmetric_double}. 
As shown in Fig.~\ref{fig:sym_double_well}(c), we observe a clear Mpemba effect: the coefficient $a_3$ for a higher initial temperature can indeed be smaller than that for a lower initial temperature. 
In Fig.~\ref{fig:sym_double_well}(d), we also confirm the Mpemba effect by directly simulating the Fokker-Planck dynamics and observing the crossing of the KL divergence trajectories for a hot initial condition and a warm initial condition.
Moreover, as shown in Fig.~\ref{fig:sym_double_well}(c), there exists a temperature $T_i$ at which $a_3$ vanishes, which is known as a strong Mpemba effect. 

To further probe the scope of these results, we also analyze the smooth quartic potential $V(x)=(x^2-1)^2$. Numerical evaluation of $a_3$ again reveals non-monotonic dependence on the initial temperature. In
Fig.~\ref{fig:sym_double_well_x4}(a), we plot the calculated $a_3$ against the initial temperature $T_i$, revealing a clear non-monotonic behavior that directly signals the presence of the Mpemba effect. To capture the full relaxation dynamics, we numerically solve the time evolution of the probability density function by the finite-difference method and quantify its distance to thermal equilibrium via the KL divergence. 
As depicted in Fig.~\ref{fig:sym_double_well_x4}(b), the KL divergence trajectory for a hot initial state explicitly crosses that of a warm initial state, providing unambiguous dynamical evidence of the Mpemba effect in this symmetric double-well system. 

Finally, we consider the system confined by symmetric walls at $\pm L$. In Fig.~\ref{fig:sym_double_well_x4}(c), the phase diagram shows that both normal and inverse Mpemba effects appear when the walls are sufficiently far from the potential minima. 
In contrast, when the walls are close to the minima, the effect disappears. 
This behavior is consistent with the interpretation that the effective asymmetry---the one induced by the mapping to a single-well system---is suppressed by a strong confinement.
Note that the phase diagram for $V(x)=(x^2-1)^2$ is completely different from that for Eq.~\eqref{V:symmetric}.
Thus, we can observe the Mpemba effect for $T=1/4$ in Fig.~\ref{fig:sym_double_well_x4}(c).

%
% \textcolor{red}{To Yue: Please check the consistency between Fig. \ref{fig:sym_double_well} for $\beta=4$ ($T=1/4$) and Fig. \ref{fig:sym_double_well_x4}.
% I believe that the label $T$ for the horizontal axis in Fig. \ref{fig:sym_double_well_x4} should be $\beta$.}

Our results in this subsection conclude that the Mpemba effect can be observed even in a symmetric double-well potential if the system is sufficiently large, as in Ref.~\cite{kumar2021}.

%%%%%%%%%%%%%%%%%%%%%%
\subsection{Asymmetric double-well potential}\label{subsec:asymdouble}
%%%%%%%%%%%%%%%%%%%%%

In this subsection devoted to the study of a generic asymmetric double-well potential, we briefly revisit the arguments sketched in~\cite{Yue26}. 
A key ingredient of our analysis is that the eigenfunction $\ell_2$ has a step-like behavior in the low bath temperature regime. 
We have already known from subsection~\ref{subsec:firstexcited} that $\ell_2$ reaches plateau values at infinity (or if the walls are sufficiently far away), so we only need to explain why the width of the step shrinks as the bath temperature becomes low. This behavior at infinity holds as long as the potential grows faster than $x^2$ at infinity, but Appendix~\ref{app:m2} shows how to deal with a potential growing quadratically at infinity (see the corresponding behavior for $\beta_{\mathcal{M}}$ as a function of $L_{-}$ in Fig.~\ref{fig:m=2}).

The physics behind this mechanism strongly differs from that of the single well explored in subsection~\ref{subsec:l2single}. 
In a double-well setting, there exists a very slow timescale associated with the crossing of the energy barrier between the two wells. 
According to the Kramers' theory, in a low bath temperature regime, $\lambda_2\propto \ee^{-\beta\Delta V_{b}}$, where $\Delta V_{b}$ is the potential barrier from the least stable well. 
The behavior of $\ell_2$ has been explored by mathematical techniques~\cite{matkowsky1977exit,bovier2004metastability}, but it is not easy to find a simple physicist's argument in the literature. 
This is what we now provide here. 
From Eq.~\eqref{eq:defl2} the Fokker-Planck equation for $\ell_2(x)$ can be simplified to $\ell_2''(x)-\beta V'(x)\ell_2'(x)\simeq 0$ if $\ell_2''\gg\lambda_2\ell_2$.
For this approximation to be true, the condition $\ell_2/w^2\gg\lambda_2\ell_2$ should be satisfied, where $w$ is the width of $\ell_2$.
In other words, we must be in a regime such that $w^{-2}\gg\lambda_2$. 
Assuming this condition is fulfilled, we integrate this equation twice, and we obtain the general solution for $\ell_2$:
\begin{equation}
    \ell_2(x)=C_0+A\int_{x^{*}}^{x}\dd y\,\ee^{\beta V(y)},
\end{equation}
where $x^{*}$ is the position of the barrier saddle point, and $A$ and $C_0$ are two constants to be determined. To evaluate this integral, we apply a saddle point expansion around the local maximum $x^{*}$ of $V$, which leads to
\begin{equation}
    V(y)\simeq V(x^{*})-\frac{1}{2}|V''(x^{*})|(y-x^{*})^{2},
\end{equation}
where we expand the potential to the second order, and use the fact that $V'(x^{*})=0$ and $V''(x^{*})<0$. Substituting this approximation into the expression of $\ell_2(x)$, it takes the form of an error function
\begin{equation}
    \ell_2(x) \simeq C_0 + A \cdot \text{erf} \left[ \sqrt{\frac{\beta |V''(x^*)|}{2}} (x - x^*) \right],
\end{equation}
where $A$ absorbs the constant from the integration. The error function approaches $-1$ in the left well ($x \to -L_-$ with $L_-\gg 1$) and $+1$ in the right well ($x \to L_+\gg 1$). 
%\textcolor{red}{Quite confusing. Do you consider the case of $L_\pm \to \infty?$}
This means $\ell_2(x)$ has constant flat values in the two wells: $\ell_2(-\infty) = C_0 - A$ and $\ell_2(+\infty) = C_0 + A$ in the limit $L_\pm\to \infty$. 
These constants can be determined by the orthogonality condition $\int \dd x\, \pi(x,\beta)\ell_2(x)=0$. 
The equilibrium distribution is highly localized around the two minima of the potential, and the ratio of the population in the two wells is given by $\ee^{-\beta\Delta V}$, 
which leads to $p_{-}\ell_2(-L_-)+p_{+}\ell_2(L_+)\simeq 0$ for sufficiently large $L_\pm$. 
Here, $\Delta V$ is the potential difference between the two wells.
If we choose the normalization convention where $\ell_2$ in the left well is set to 1, we have 
\begin{equation}
    \begin{split}
    \ell_2(x) \simeq  &- \frac{1 + \ee^{-\beta \Delta V}}{2} \text{erf} \left[ \sqrt{\frac{\beta |V''(x^*)|}{2}} (x - x^*) \right]\\
    &+\frac{1 - \ee^{-\beta \Delta V}}{2}.
    \end{split}
\end{equation}
The characteristic width of the error function is $w(T)\sim 1/\sqrt{\beta|V''(x^*)|}$, which vanishes as the temperature goes to zero. The self-consistent condition for our calculation to be valid is therefore 
\begin{equation}
    w^{-2}\gg \lambda_2\Longleftrightarrow \beta |V''(x^*)|\gg \lambda_2
\end{equation}
which is well verified in the low bath temperature limit, since $\lambda_2\propto\ee^{-\beta\Delta V_{b}}$.

Therefore, at low bath temperature, $\ell_2(x)$ also becomes a step function, and its derivative can be approximated as a downward delta peak localized at the potential barrier. 
In the presence of walls located sufficiently far away from $x^*$, we see from the single-well derivation in subsection~\ref{subsec:derivationsinglewall} that whatever internal structure the potential exhibits (be it single or double well), what matters is the asymptotic behavior in the vicinity of the wall, and our analytic conclusions from the single well directly export to the double-well configuration. 
Note, however, that the double-well potential is associated with a very long time scale $\lambda_2^{-1}$ in the low bath temperature regime, which makes the concrete observation of the Mpemba effect easier than in the single-well setting. 

We now show that there is no Mpemba effect in the absence of walls for a cold enough initial condition, regardless of the bath temperature.

%%%%%%%%%%%
\subsection{General proof for the absence of the Mpemba effect for a cold initial condition in the absence of confining walls}
\label{sub:proof_absence}
%%%%%%%%%%%%%
\begin{comment}
As demonstrated in Ref.~\cite{Yue26}, Eq.~\eqref{eq:transition} can be used even for an asymmetric double-well potential with $m>2$ if we begin with a low initial temperature.
However, this argument cannot be directly applied to $m\le 2$, e.g., for a connected harmonic-antiharmonic-harmonic potential with $m=2$. 
\end{comment}
In this subsection, we briefly prove the absence of the Mpemba effect for a general unbounded asymmetric potential, i.e., $L_\pm\to \infty$ if we begin with an initial condition at sufficiently low temperature.
This proof can be used even for $m=2$ in Eq.~\eqref{asymptotic_V(x)}, which was excluded in the previous argument.

Let us consider Eq.~\eqref{eq:derivativea2} for an asymmetric double-well potential. We denote by $x^*$ the location of the barrier, and by $x_\pm$ the locations of the deep/shallow wells ($x_-<x^*<x_+$). Since $\pi(x,\beta_i)$ is positive, the sign of $\partial a_2/\partial \beta_i$ is determined by the sign of $\ell_2(x)$ and $U_i-V(x)$. As demonstrated in Appendix~\ref{Fixed_sign_l2}, the node $x_0$ of $\ell_2(x)$ must lie in between $x_-$ and $x_+$. Then, $\ell_2(x)$ has a definite sign in the left/right well with opposite sign. If we choose $\beta_i$ large enough, then, because $x_0<x_+$, over the region where $\pi$ is non-zero, $\ell_2$ is always negative, regardless of the width of $\ell_2$, that is, regardless of the bath temperature.
%\begin{comment}
Thanks to Eq.~\eqref{phi_r_ell}, $\ell_2(x)$ includes the factor $e^{-\beta V(x)/2}$.
Thus, $\ell_2(x)$ must be localized in the right well at a sufficiently low temperature (large $\beta$).
Therefore, we conclude that $\ell_2(x)$ has a fixed sign.
%\end{comment}
Since $U_i-V(x)$ also has a fixed sign for sufficiently low initial temperature, as demonstrated in Appendix~\ref{Sign_DV}.
Then we conclude that $\partial a_2/\partial \beta_i$ cannot vanish when the initial temperature is small enough.
This prohibits the solution of $\partial a_2/\partial \beta_i=0$, which determines $\beta_\mathcal{M}$. Thus, there is no Mpemba effect for an asymmetric double-well potential if the system is unbounded, i.e., $L_\pm\to \infty$.
The argument in this subsection can be used even for a connected harmonic-antiharmonic-harmonic potential, characterized by $m=2$.

It should be noted that this subsection only indicates the absence of $\beta_\mathcal{M}$ as the solution of $\partial a_2/\partial \beta_i=0$.
Interestingly, the absence of $\beta_\mathcal{M}$ does not guarantee the absence of any crossing of observables, such as $D_\mathrm{KL}(p||\pi)$.
Indeed, if we consider the behavior of a particle confined in a weakly asymmetric potential without walls, we find double crossings of $D_\mathrm{KL}(p||\pi)$, in which $a_2$ is monotonic against $\beta_i$ but $a_3$ has a maximum at a certain $\beta_i$.

To illustrate this picture, we have analyzed a weakly asymmetric double-well potential $V(x)=(x^{2}-1)^{2} + \epsilon x $ with  $\epsilon$ as small as $\epsilon=0.01$ at $T=0.5$.
As in Fig.~\ref{fig:sym_double_well_x4} with $\epsilon=0$, $a_3$ has a single peak against $T_i$ or $\beta_i$, while $a_2$ is a monotonic increasing function of $T_i(>T)$.
In this case, we confirm the existence of the double-crossing of the KL divergence (see Fig.~\ref{fig:fig_DKL_small_k}).
The double crossings (or even number crossings) can not be regarded as the Mpemba effect in the sense of long-time behavior~\cite{Teza25}. Similar multiple crossings have been observed in many previous studies by considering nonequilibrium initial conditions~\cite{Malhotra24,Takada21a,Chatterjee_2023,Chatterjee24}. Here, we illustrate that the double crossing can also be observed for equilibrium initial conditions, which is a consequence of the non-monotonic behavior of $a_3$ and monotonic behavior of $a_2$.

%%%%%%%%%%%
\begin{figure}[t]
\begin{center}
\includegraphics[width=1\linewidth]{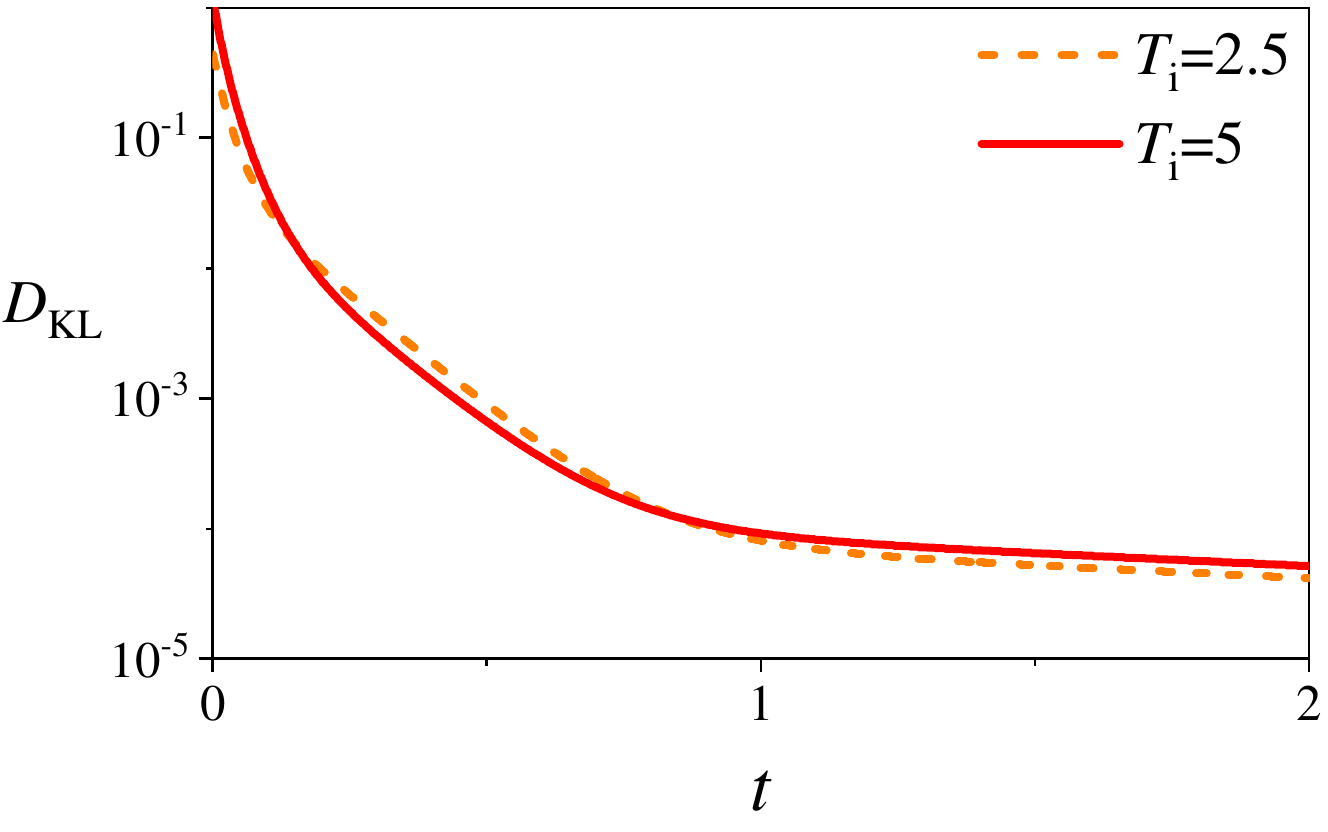}
\caption{The time evolution of $D_\mathrm{KL}(p||\pi)$ for $V(x)=(x^{2}-1)^{2} + 0.01 x $ at $T=0.5$ is shown for different initial temperature $T_{i}=5$ (red) and $T_{i}=2.5$ (orange). The simulation box is set to $L_{\text{BL}}=L_{\text{BR}}=100$.
} 
\label{fig:fig_DKL_small_k}
\end{center}
\end{figure}
%%%%%%%%%%

%%%%%%%%%%%%%%%%
\subsection{Boundary-induced Mpemba effect}\label{sub:boundary-Mpemba}
%%%%%%%%%%%

To guarantee the existence of $\beta_{\mathcal{M}}$, we can analyze the energy difference between the left and right regions, defined as $\delta U(\beta_i):= U_-(\beta_i) - U_+(\beta_i)$. 
In the limit $\beta_i \to \infty$, the populations are perfectly localized at the local minima, yielding $\delta U(\infty) = V(x_-) - V(x_+) > 0$ (assuming the left well is shallower). 
In the limit $\beta_i \to 0$, the left energy $U_-(\beta_i\to 0)$ is bounded by the hard wall at $-L_-$, whereas the right energy $U_+$ diverges to infinity due to the unbounded integration domain on the right side. 
Consequently, $\delta U(0) \to -\infty$. 
Because $\delta U(\beta_i)$ is a continuous function, the intermediate value theorem ensures that it must cross zero at least once. 
This proof (detailed in Appendix~\ref{app:transition_temp}) dictates that a wall on the shallower side guarantees at least one Mpemba temperature, even for $m=2$ in Eq.~\eqref{asymptotic_V(x)}. 
Therefore, we verified that the Mpemba effect observed in the asymmetric potential originates from the influence of the sidewalls.

%%%%%%%%%%%%%%
\subsection{Towards a multistage Mpemba effect?} \label{sec:asym_double_well}
%%%%%%%%%%%%%
The initial energy difference between the two wells $\delta U(\beta_i)$ is not constrained to be monotonic; the energy difference can cross zero multiple times, which can also give rise to as many Mpemba effects as desired, as discussed below. 
We term this phenomenon the multistage Mpemba effect. 
In Ref.~\cite{Yue26}, we demonstrated a scenario of the coexistence of a maximum and a minimum for $a_2$ using an asymmetric double well where the shallow and deep sides scale as $x^4$ and $x^2$, respectively. 
As the initial temperature rises, the population initially shifts to the shallow well but is subsequently forced back into the deep well due to the steeper $x^4$ potential, triggering the first Mpemba temperature. 
A second Mpemba temperature emerges if a steep wall is placed on the deep-well side, eventually reflecting the population towards the shallow side at even higher temperatures.

Extending this logic, we can design a potential exhibiting three Mpemba temperatures
\begin{equation}
V(x) = \begin{cases}
x^4 - 2x^2 & \text{for } x < 0, \\
\dfrac{x^4 + b x^8}{1 + a x^2} - 2x^2 & \text{for } x > 0.
\end{cases}
\end{equation}
where the parameter $b$ is chosen to be sufficiently small so that the deep well is governed by $x^2$ at small $x$ and $x^6$ at large $x$. The mechanism unfolds in three stages. 
First, as shown in Fig.~\ref{fig:sym_double_well_multi}(a), the $x^4$ versus $x^2$ competition drives the population from the deep well to the shallow well and back, yielding the initial Mpemba temperature. 
Second, as the $x^6$ term dominates at higher temperatures (see Fig.~\ref{fig:sym_double_well_multi}(b)), the population is driven back toward the shallow side, producing the second transition. Finally, introducing a confining wall on the shallow side forces a final transfer back into the deep well, generating the third Mpemba temperature. 
As a result, Fig.~\ref{fig:sym_double_well_multi}(c) shows three Mpemba temperatures associated with $a_{2}$. 
In principle, this approach allows us to engineer potentials with an arbitrary number of Mpemba temperatures. This concept is equally applicable to asymmetric single-well systems.

% Multiple Mpemba effects have been reported in many previous papers, such as Ref.~\cite{Takada21a,Chatterjee_2023,Chatterjee24} under nonequilibrium initial conditions.
% However, the significant result in this subsection is that we can clarify the condition for the multiple Mpemba effects under initial equilibrium conditions, as reported in Ref.~\cite{Malhotra24}.

Note that what we have called the multistage Mpemba effect differs from the observation of multiple crossings of the distance function (as illustrated in Fig.~\ref{fig:fig_DKL_small_k}). The double Mpemba effect (i.e., the existence of two Mpemba temperatures) has been numerically observed in Ref.~\cite{Malhotra24} in a double-well potential. Here, we have provided an understanding of the mechanism behind this phenomenon, along with a recipe for engineering potentials with an arbitrary number of such Mpemba temperatures (this applies equally to single- or double-well potentials). Hopefully, the multistage Mpemba effect will not remain a theoretical curiosity; it could be experimentally realized in systems on condition that the potential landscape can be finely tuned according to our recipe.

%%%%%%%%%%%
\begin{figure}[t]
\begin{center}
\includegraphics[width=1\linewidth]{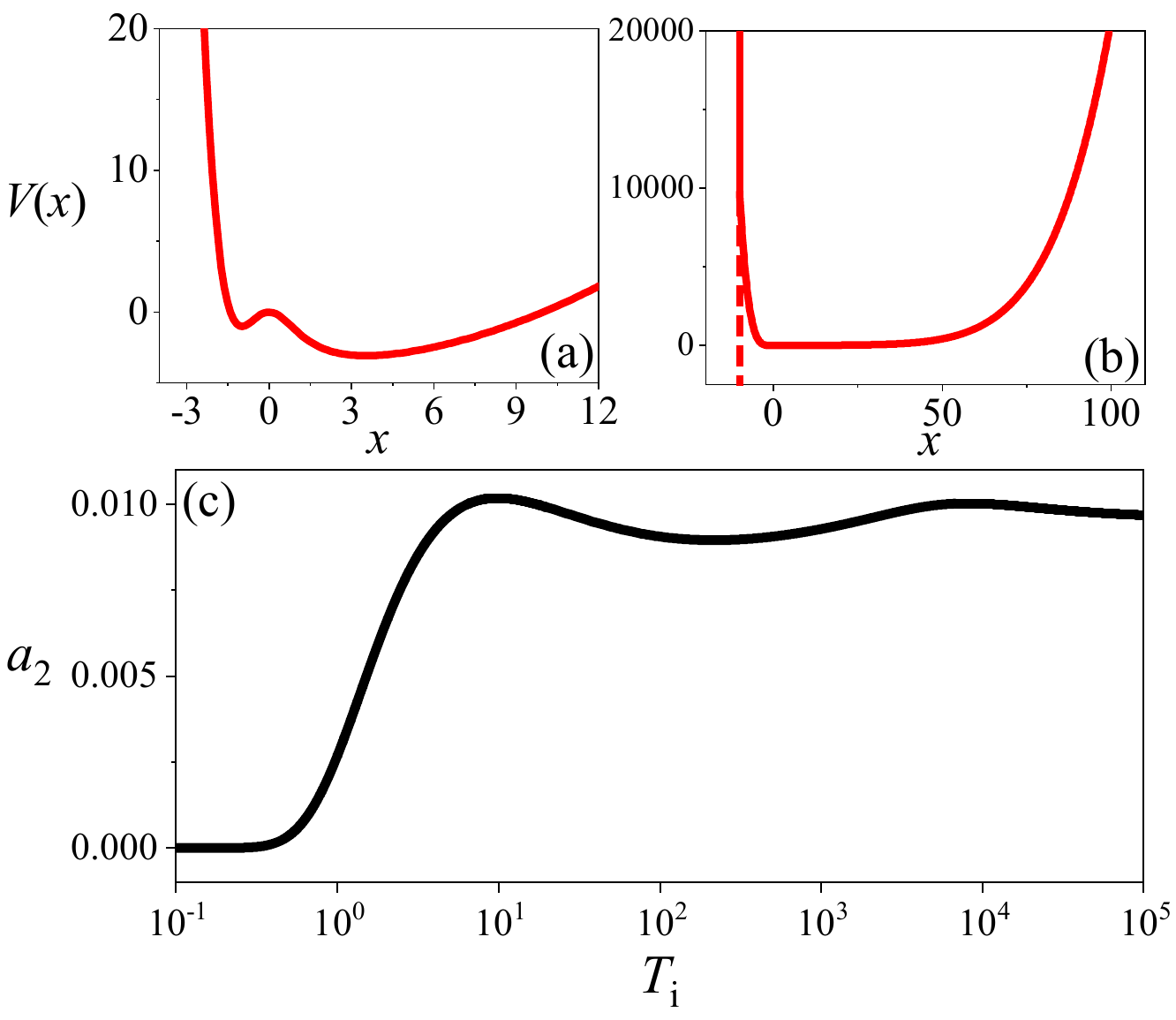}
\caption{Engineered potential and the resulting multistage Mpemba effect. (a) A local magnified view of the constructed asymmetric double-well potential $V(x)$. (b) The global profile of the identical potential over a broader spatial range. (c) The overlap coefficient $a_2$ plotted as a function of the initial temperature $T_i$. Here, the parameters are set to $T=0.1$, $a=0.49$ and $b=10^{-8}$, and the simulation box is set to $L_{\text{BL}}=10$ and $L_{\text{BR}}=200$.}  
\label{fig:sym_double_well_multi}
\end{center}
\end{figure}
%%%%%%%%%%

%%%%%%%%%%%%%
\section{Conclusions and outlook}
\label{sec:conclusion}
%%%%%%%%%%%%%%%%%%%%

In this work, we have provided a comprehensive classification of the conditions required to observe the Mpemba effect in one-dimensional overdamped Langevin systems starting from equilibrium initial conditions. 
Our analysis reveals that the existence of this anomalous relaxation phenomenon is fundamentally tied to the presence of boundaries, whether hard or soft, rather than to the specific internal features of the potential landscape, such as the presence of multiple minima or energy barriers. 

For single-well potentials, we established that the Mpemba effect is guaranteed in asymmetric landscapes, provided at least one wall is present. 
In the case of double-well potentials, the behavior is more nuanced; while asymmetry and boundary conditions remain primary drivers, even symmetric double-wells can exhibit the effect under specific configurations. 
A key theoretical pillar of our findings is that in the low-bath-temperature regime, the derivative of the first excited state eigenfunction, $-\partial_x \ell_2(x)$, effectively behaves as a Dirac delta peak.
This behavior combined with the presence of the wall, which influences the initial population distribution in a non-trivial way, leads to a non-monotonic dependence of the overlap coefficient $a_2$ on the initial temperature, which is a necessary condition for the Mpemba effect.

Furthermore, we demonstrated the ability to engineer potentials that exhibit what we termed a multistage Mpemba effect. 
By carefully constructing asymmetric landscapes where different power-law behaviors (e.g., $x^2, x^4, x^6$) dominate at distinct energy scales, such landscapes can be viewed as a sequence of increasingly steeper walls, and we can force the population to shift back and forth between wells as the initial temperature increases, resulting in multiple zeros for the derivative of the overlap coefficient. 
This systematic approach provides a recipe for designing systems with an arbitrary number of relaxation anomalies.

Several open questions remain for future exploration. While our current study focuses on the necessary condition (the non-monotonicity of $a_2$), a full treatment of the sufficient condition requires accounting for higher-order modes, specifically $a_3$ as in Ref.~\cite{Hayakawa2026}. 
Additionally, while we have shown that the qualitative behavior of systems with soft walls mirrors that of systems with hard walls, a more rigorous mapping between these two regimes would be beneficial. 
Finally, extending this wall-induced framework to higher-dimensional systems and quantum mechanical relaxation remains a promising frontier for further research.

%%%%%%%%%%%%
%\section{Outlook}\label{Sec:outlook}
%%%%%%%%%%%

%\textcolor{red}{To Yue: You have not cited Refs.~\cite{walker2021anomalous,morsch1979one,van1992stochastic}.
%I believe that Ref.~\cite{walker2021anomalous} should be cited in a proper place.
%}

\begin{acknowledgements}
We appreciate the useful exchange of communications with John Bechhoefer, Siddharth Sane, Marija Vucelja and Apurba Biswas. 
HH also thanks Satoshi Takada for fruitful discussions, and was partially supported by JSPS KAKENHI Grant No. JP26K06960.
FvW acknowledges the financial support of the ANR THEMA AAPG2020 grant. 
TVV was supported by JSPS KAKENHI Grant Nos.~JP23K13032, JP26K00022, and JP26H02015. 
YL gratefully acknowledges the Yukawa Research Fellow, co-sponsored by the YITP and the Yukawa Memorial Foundation. 
\end{acknowledgements}

\section*{Data availability}
The data are not publicly available upon publication. The data are available from the authors upon reasonable request.

% \newpage

%%%%%%%%%%%%%%%%%%%%%%%%
\begin{appendix}

%%%%%%%%%%%%%%%
\section{The $m=2$ case}\label{app:m2}
In the main text, we have focused on the $m>2$ case. This appendix briefly explains what needs to be modified when $m=2$ and when there exists a hard wall. We are now in the situation where $V(x)$ grows quadratically with $x$ for large $x$, until a hard wall at $\pm L_\pm$ large enough is felt. For a double well potential, in the low bath temperature regime, we know that $\lambda_2\sim\ee^{-\beta \Delta V_b}$, which can become arbitrarily small as the bath temperature is decreased. The differential equation satisfied by $\ell_2$ reads 
\begin{equation}
T\ell_2''(x)-    V'(x) \ell_2'(x) =-\lambda_2 \ell_2(x),
\end{equation}
As we shall a posteriori verify, $\ell_2$ grows as a power law with positive exponent for large $x$, and thus the diffusive contribution can be neglected. Using $V'(x)\sim x$, we obtain, say for $L_+\gg x\gg x_+$
\begin{equation}
    \frac{\ell_{2}'(x)}{\ell_2(x)} \sim \frac{\lambda_2}{x}.
\end{equation}
Integrating this equation, we have
\begin{equation} 
    \ell_2(x)\simeq A_+ x^{\lambda_2},
\end{equation} 
which is indeed a power-law growth. Since $\lambda_2$ can be made arbitrarily small by decreasing the bath temperature, we can write that as $\lambda_2\to 0$ but at fixed $x$,
\begin{equation}
    \ell_2(x)\simeq A_++O(\lambda_2\ln x) ,
\end{equation}
and thus, as long as $\lambda_2\ln x$ remains small, we see that $\ell_2$ reaches a plateau value on the $x>0$ side. 
A posteriori, this tells us that the bath temperature and $L_+$ should be such that $L_+\ll \ee^{1/\lambda_2}$. 
Hence, the range of validity of our calculation is that for a given $L_+$, the bath temperature should be chosen so that $\ee^{\ee^{\beta\Delta V_b}}\gg L_+$. 
Of course, a similar reasoning holds on the left side with $L_-$ replacing $L_+$. 
In practice, this inequality is already achieved at moderate bath temperatures. In those conditions, and for all practical purposes, $\ell_2(x)$ saturates to a constant value in the vicinity of the walls, thus behaving in a way qualitatively similar to the $m>2$ case. 

Here, we use the connected harmonic-antiharmonic-harmonic potential to numerically calculate $\beta_\mathcal{M}$ as a function of $L_-$. The potential reads
\begin{equation}
V(x)=\begin{cases}
    \frac{\kappa}{2}(x+1)^2, & x<x_{m},\\
    V_{\text{M}}-\frac{\kappa}{2}x^2, & x_{m}\le x\le x_{p},\\
    -V_{\text{m}}+\frac{\kappa}{2}(x-\alpha)^2, & x>x_{p},
\end{cases}
\end{equation}
where $x_{m}$ and $x_{p}$ are the connecting points, $V_{\text{M}}$ and $V_{\text{m}}$ are the potential barrier height and the potential minimum, and $\alpha$ is the position of the minimum. The parameters $\alpha$, $x_{m}$, $x_{p}$ and $V_{\text{M}}$ are determined by the continuity of the potential and its first derivative at the connecting points. As shown in Fig.~\ref{fig:m=2}, we see that the scaling behavior of the Mpemba temperature with the position of the left wall also holds when $m=2$, including at moderate bath temperatures. 
\begin{figure}[t]
\begin{center}
\includegraphics[width=0.45\textwidth]{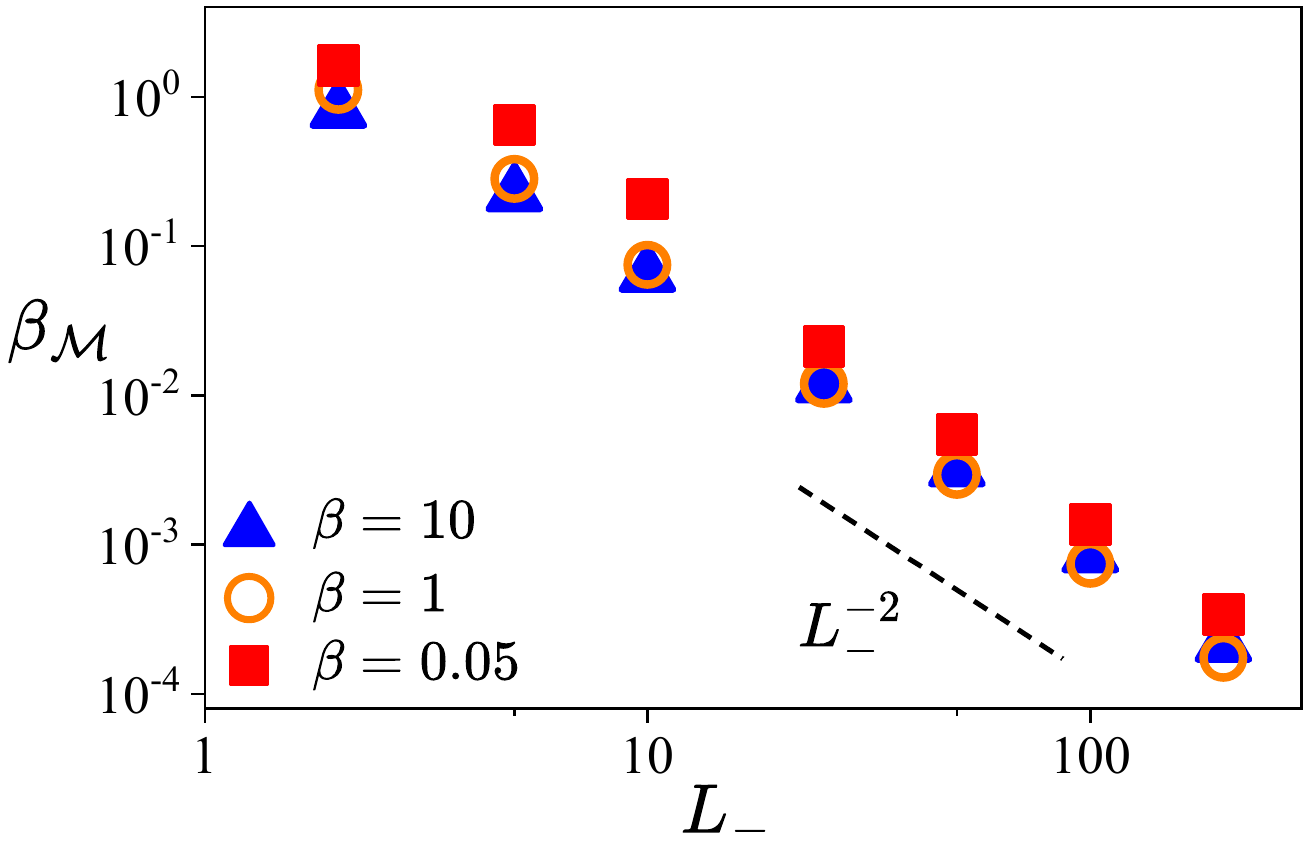}
\caption{The inverse temperature $\beta_\mathcal{M}$ is shown as a function of $L_-$ for various bath temperatures in a quadratic potential with a hard wall at $x=-L_-=-L_{\text{BL}}$. We have used $L_+=2L_-$ ($L_{\text{BR}}=2L_{\text{BL}}$), $\kappa=1$, $\alpha=3/2$, $x_{m}=-1/2$, $x_{p}=3/4$, $V_{\text{M}}=1/4$ and $V_{\text{m}}=5/8$. The dashed line is drawn to guide the eye and corresponds to the scaling $\beta_\mathcal{M}\sim L^{-2}_-$.
}
\label{fig:m=2}
\end{center}
\end{figure}

\section{Asymptotic behavior of the eigenmode $\ell_2(x)$ and the harmonic potential benchmark}
\label{app:ell2_behavior}

In this Appendix, we analyze the structural properties of the first non-trivial left eigenmode $\ell_2(x)$ governed by the adjoint Fokker-Planck operator. The exact eigenvalue equation is given by
\begin{equation}
T \frac{d^2 \ell_2(x)}{dx^2} - V'(x) \frac{d \ell_2(x)}{dx} = -\lambda_2 \ell_2(x),
\label{eq:app_eigen_eq}
\end{equation}
where $T$ represents the temperature and $\lambda_2 > 0$ is the relaxation rate of the slowest non-equilibrium mode. 

To evaluate whether a quadratic profile $\ell_2(x) \sim \text{const.} + A x^2$ holds universally, we must contrast the local behavior inside a weakly asymmetric barrier with the global behavior under a purely harmonic potential $V(x) = \kappa x^2/2$.

\subsection{The harmonic potential benchmark ($V(x) \sim x^2$)}

Let us consider a symmetric harmonic confinement defined by $V(x) = \kappa x^2/2$ with $\kappa > 0$. Substituting $V'(x) = \kappa x$ into Eq.~\eqref{eq:app_eigen_eq}, we obtain
\begin{equation}
T \ell_2''(x) - \kappa x \ell_2'(x) + \lambda_2 \ell_2(x) = 0.
\label{eq:app_harmonic_ode}
\end{equation}
By introducing the dimensionless variable $\xi = \frac{x}{\sqrt{2T/\kappa}}$, Eq.~\eqref{eq:app_harmonic_ode} can be transformed into Hermite's differential equation:
\begin{equation}
\frac{d^2 \ell_2}{d\xi^2} - 2\xi \frac{d \ell_2}{d\xi} + \frac{2\lambda_2}{\kappa} \ell_2 = 0.
\end{equation}
For the solution to remain physically bounded under the natural weight of the steady-state Boltzmann distribution $P_{\text{st}}(x) \propto e^{-V(x)/T}$, the eigenvalues must satisfy discrete quantization conditions. The first non-trivial relaxation rate corresponds to $n=1$, yielding:
\begin{equation}
\lambda_2 = \kappa.
\end{equation}
The corresponding left eigenmode is the first-order Hermite polynomial, which depends strictly linearly on the spatial coordinate:
\begin{equation}
\ell_2(x) = \sqrt{\frac{\kappa}{T}} x.
\label{eq:app_linear_profile}
\end{equation}
Equation~\eqref{eq:app_linear_profile} proves that for a globally harmonic potential $V(x) \sim x^2$, the eigenmode profile is linear across all space, rather than quadratic.

\subsection{Resolution of the quadratic local profile in asymmetric potentials}

The quadratic form $\ell_2(x) \sim \text{const.} + A x^2$ obtained in the main text is a specific, local consequence of expanding around a weakly asymmetric double-well potential, rather than an intrinsic property of arbitrary potentials $V(x)$. 

Consider a system where a reference potential $V_0(x)$ is purely symmetric, meaning the corresponding zeroth-order eigenmode $\ell_2^{(0)}(x)$ is strictly anti-symmetric. When we introduce a small, weakly asymmetric perturbation $\epsilon V_1(x)$, we can expand the eigenmode as $\ell_2(x) = \ell_2^{(0)}(x) + \epsilon \ell_2^{(1)}(x) + \mathcal{O}(\epsilon^2)$.

Inside the central barrier region ($x_- \le x \le x_+$) of a deep bistable landscape, the potential is locally flat or weakly curved, implying $V'(x) \simeq 0$. In this localized regime, Eq.~\eqref{eq:app_eigen_eq} simplifies to a diffusion-dominated equation:
\begin{equation}
T \frac{d^2 \ell_2(x)}{dx^2} \simeq -\lambda_2 \ell_2(x).
\end{equation}
Because the perturbation $\epsilon V_1(x)$ breaks the parity of the landscape, the first-order correction $\ell_2^{(1)}(x)$ must pick up a symmetric component to satisfy the modified boundary conditions across the wells. Integrating this local perturbation over the central region naturally yields the local quadratic profile:
\begin{equation}
\ell_2^{(1)}(x) = B_0 + A x^2, \quad \text{for } x \in [x_-, x_+],
\end{equation}
where $A \approx -\frac{\lambda_2}{2T} B_0$. 

In conclusion, the profile $\ell_2(x) \sim \text{const.} + A x^2$ is valid strictly as a local, perturbative description inside a weakly asymmetric barrier where $V'(x) \to 0$. It does not apply asymptotically to growing potentials, nor does it hold for a harmonic potential $V(x) \sim x^2$, where the true global profile is linear ($\ell_2(x) \propto x$).

%%%%%%%%%%%%%%
\section{Properties of parabolic cylinder functions}\label{app:Weber_Function}
%%%%%%%%%%%%%%%

The function $D_{\nu}(x)$ denotes the parabolic cylinder (Weber) function, which is defined as a solution of the differential equation
\begin{equation}
\frac{\dd^2 y}{\dd x^2} + \left(\nu + \frac{1}{2} - \frac{x^2}{4}\right) y = 0.
\end{equation}
A convenient explicit representation is given in terms of the confluent hypergeometric function ${}_1F_1$ as
\begin{align}\label{def:D_nu}
D_{\nu}(x)
&= 2^{\nu/2} \ee^{-x^2/4}
\Bigg[
\frac{\sqrt{\pi}}{\Gamma\!\left(\frac{1-\nu}{2}\right)}
\,{}_1F_1\!\left(-\frac{\nu}{2};\frac{1}{2};\frac{x^2}{2}\right)
\nonumber\\
&\qquad
- \frac{\sqrt{2\pi}\,x}{\Gamma\!\left(-\frac{\nu}{2}\right)}
\,{}_1F_1\!\left(\frac{1-\nu}{2};\frac{3}{2};\frac{x^2}{2}\right)
\Bigg],
\end{align}
where $\Gamma(x)=\int_0^\infty \dd t\,t^{x-1}\ee^{-t}$, and ${}_1F_1(a,b;z)$ is defined in Eq.~\eqref{def:M}.

%\begin{widetext}
%%%%%%%%%%%%%%%%%%%%%%%%%%%%%%%%%%%%%%%%%%%%%%%%%%%%%%%%%%%%%%%%%%%%%%%%%%%%%%%%%%%%%%%%%%%%%%%%%%%%
\section{Rigorous proof of the fixed sign bias of $\ell_2(x)$}\label{Fixed_sign_l2}
%%%%%%%%%%%%%%%%%%%%%%%%%%%%%%%%%%%%%%%%%%%%%%%%%%%%%%%%%%%%%%%%%%%%%%%%%%%%%%%%%%%%%%%%%%%%%%%%%%

This appendix gives a fully self-contained and detailed proof that, for an asymmetric double-well potential, the second left eigenfunction $\ell_2(x)$ of the Fokker-Planck operator has a \emph{fixed sign bias} between the left and right wells in the low-temperature regime.  
As mentioned in the main text, we assume that the region can be distinguished as the left well ($x<x_-)$, the barrier region ($x_-<x<x_+$), and the right well ($x>x_+$).
We also assume that the local minimum of $V(x)$ in the left well is higher than the global minimum of $V(x)$ at $x=x_+$ in the right well. 
The proof is constructive and uses only explicit properties of the potential and elementary Sturm-Liouville theory.

%%%%%%%%%%%%%%%%%%%%%%%%%%%%%%%%%%%%%%%%%%%%%%%%%%%%%%%%%%%%
\subsection{Low-temperature localization}
%%%%%%%%%%%%%%%%%%%%%%%%%%%%%%%%%%%%%%%%%%%%%%%%%%%%%%%%%%%%

As $\beta\to\infty$, the right well, which contains the global minimum, dominates the Boltzmann weight.  
Correspondingly, $\varphi_1(x)$ and $|\varphi_2(x)|$ are exponentially localized near $x=\alpha$ in the right well, with only exponentially small weight in the left well.

In contrast, the eigenfunction $\varphi_2(x)$ must be orthogonal to $\varphi_1(x)$:
\begin{equation}
    \int \dd x \varphi_1(x)\varphi_2(x)=0.
\label{eq:orth}
\end{equation}

Because $\varphi_1(x)$ is strictly positive and sharply localized in the right well, the integral in Eq.~\eqref{eq:orth} can vanish only if $\varphi_2(x)$ changes sign between the right-dominated region and the rest of the domain.

%%%%%%%%%%%%%%%%%%%%%%%%%%%%%%%%%%%%%%%%%%%%%%%%%%%%%%%%%%%%
\subsection{Location of the node}
%%%%%%%%%%%%%%%%%%%%%%%%%%%%%%%%%%%%%%%%%%%%%%%%%%%%%%%%%%%%

We now show that the unique node $x_0$ of $\varphi_2(x)$ lies in the central region $(x_-,x_+)$ for sufficiently large $\beta$.
Assume, for contradiction, that $x_0\le x_-$.  
Then $\varphi_2(x)$ has a fixed sign throughout the central region and the right well.  Since $\varphi_1(x)$ is exponentially small in those regions, the orthogonality condition \eqref{eq:orth} cannot be satisfied.
Similarly, assuming $x_0\ge x_+$ leads to the same contradiction.

Therefore, we conclude the inequality for the node satisfying
\begin{equation}
x_-<x_0<x_+.
\end{equation}

%%%%%%%%%%%%%%%%%%%%%%%%%%%%%%%%%%%%%%%%%%%%%%%%%%%%%%%%%%%%
\subsection{Fixed sign bias in the wells}
%%%%%%%%%%%%%%%%%%%%%%%%%%%%%%%%%%%%%%%%%%%%%%%%%%%%%%%%%%%%

Since the node lies strictly in the barrier (central) region, the sign of $\varphi_2(x)$ is constant
throughout each well:
\begin{equation}
    \varphi_2(x)
\begin{cases}
> 0 & x<x_0,\\
 <0 & x>x_0,
 \end{cases}
 \end{equation}
 up to an overall sign choice.

The left eigenfunction is
\begin{equation}
\ell_2(x)=\ee^{\beta V(x)/2}\varphi_2(x).
\end{equation}
Since the exponential factor is positive everywhere, $\ell_2$ has the same sign structure
as $\varphi_2$.

Hence, for an asymmetric double-well potential at sufficiently low temperature, $\ell_2(x)$ has a fixed sign throughout the left well and the opposite fixed sign throughout the right well.

%%%%%%%%%%%%%%%%%%%%%%%%%%%%%%%%%%%%%%%%%%%%%%%%%%%%%%%%%%%%
\subsection{Conclusion}
%%%%%%%%%%%%%%%%%%%%%%%%%%%%%%%%%%%%%%%%%%%%%%%%%%%%%%%%%%%%

For connected harmonic--antiharmonic--harmonic double-well potentials with a unique global
minimum, the second left eigenfunction $\ell_2(x)$ exhibits a rigorous and robust
left-right sign bias in the low-temperature regime.  
The result relies only on explicit properties of the potential, orthogonality to the ground state, and the Sturm-Liouville
nodal theorem, and does not invoke heuristic arguments.
%On the other hand, if we are interested in the motion in a weakly asymmetric potential, the monotonicity of $a_2$ against $\beta_i$ can be shown regardless of the form of the potential.

%\end{widetext}

%%%%%%%%%%%%%%%%

\section{Rigorous proof of the sign correlation of $V(x)-\langle V\rangle_{\beta_i}$}\label{Sign_DV}

%%%%%%%%%%%%%%%%%%%%%%%%%%%%%%

This appendix provides a mathematically explicit and self-contained proof of the sign correlation between the factor $V(x)-\langle V\rangle_{\beta_i}$ and the spatial location relative to the dominant probability weight.  
We clearly distinguish what can be proven \emph{generally} from what requires \emph{specific structural assumptions} on the potential.

%The discussion is intentionally detailed and computational, rather than heuristic.

%%%%%%%%%%%%%%%%%%%%%%%%%%%%%%%%%%%%%%%%%%%%%%%%%%%%%%%%%%%%
\subsection{Problem setting and statement}
%%%%%%%%%%%%%%%%%%%%%%%%%%%%%%%%%%%%%%%%%%%%%%%%%%%%%%%%%%%%

Let $V(x)$ be a smooth one-dimensional potential satisfying:
%\begin{enumerate}
(1) $V(x)\to+\infty$ as $|x|\to\infty$ (normalizability);
(2) $V(x)$ has exactly two local minima $x_-<x_+$ and one local maximum $x^*$;
and (3) $V(x_-)>V(x_+)$ (asymmetric double well).
%When we compare the setup with Eq.~\eqref{eq:piecewiseV}, $x_L=-1$, $x_R=\alpha$, and $x_M=0$.

By using the Boltzmann distribution $\pi(x,\beta)=\ee^{-\beta_i V(x)}/Z(\beta_i)$ and the thermal average
\begin{equation}
    \langle V\rangle_{\beta_i}=\int \dd x V(x)\pi(x,\beta_i) ,
\end{equation}
we can claim: For sufficiently large $\beta_i$ (low initial temperature), the sign of $V(x)-\langle V\rangle_{\beta_i}$ is (i) negative in a neighborhood of the global minimum $x_+$ and (ii) positive in a neighborhood of the higher-energy well and the barrier region.
We now prove this claim under increasing levels of generality.

%%%%%%%%%%%%%%%%%%%%%%%%%%%%%%%%%%%%%%%%%%%%%%%%%%%%%%%%%%%%
\subsection{General low-temperature expansion in the initial condition (Model-independent)}
%%%%%%%%%%%%%%%%%%%%%%%%%%%%%%%%%%%%%%%%%%%%%%%%%%%%%%%%%%%%

The unique global minimum is located at $x=x_+$.
Expanding $V(x)$ near $x_+$ yields
\begin{equation}
V(x)=V(x_+)+\frac{k_*}{2}(x-x_+)^2+O\left((x-x_+)^3\right),
\end{equation}
where $k_{*}>0$.
By Laplace's method, the partition function satisfies
\begin{equation}
Z(\beta_i)=\ee^{-\beta_i V(x_+)}\sqrt{\frac{2\pi}{\beta_i k_*}}
\left[1+O\left(\frac{1}{\beta_i}\right)\right].
\end{equation}
Similarly, the thermal average of the potential is
\begin{equation}
\langle V\rangle_{\beta_i}=V(x_+)+\frac{1}{2\beta_{i}}+O\left(\frac{1}{\beta_i^2}\right).
\label{eq:Vmean}
\end{equation}
Therefore, in a neighborhood $|x-x_+|\ll1$,
\begin{equation}
V(x)-\langle V\rangle_{\beta_i}
=\frac{k_*}{2}(x-x_+)^2-\frac{1}{2\beta_i}+O(|x-x_+|^3).
\end{equation}
For $|x-x_+|< (\beta_i k_*)^{-1/2}$, the right-hand side is strictly negative.
Thus, in the region where $\pi(x,\beta_i)$ is exponentially concentrated, $V(x)-\langle V\rangle_{\beta_i}<0$.

This conclusion holds for \emph{any} confining potential with a unique global minimum, if we use a low initial temperature.

%%%%%%%%%%%%%
\subsection{General condition for sign correlation}
%%%%%%%%%%%%%%%%%%%%%%

The previous argument was formulated in the limit of low initial temperature $\beta_i \to \infty$, which guarantees strong localization of the initial distribution around the global minimum.
However, the essential requirement is not the low temperature itself, but the localization property of the probability distribution $\pi(x,\beta_i)$.

More precisely, the argument remains valid whenever $\pi(x,\beta_i)$ is sufficiently concentrated in a neighborhood of the global minimum, i.e.,
\begin{equation}
\int_{|x-x_+|<d} \dd x\, \pi(x,\beta_i) \approx 1
\end{equation}
for some small $d$.

Under this condition, one still has
\begin{equation}
V(x) - \langle V \rangle_{\beta_i}
\begin{cases}
< 0 & \text{in the dominant region},\\
> 0 & \text{outside},
\end{cases}
\end{equation}
and the sign correlation used in the main text follows.

Thus, the assumption of low initial temperature can be replaced by a more general localization condition on the initial distribution. In practice, this is typically realized in the limit $\beta_i \gg 1$, but the argument does not strictly rely on this limit.

%%%%%%%%%%%%%%%%%%%%%%%%%%%%%%%%%%%%%%%%%%%%%%%%%%%%%%%%%%%%
\subsection{Contribution from the secondary well}
%%%%%%%%%%%%%%%%%%%%%%%%%%%%%%%%%%%%%%%%%%%%%%%%%%%%%%%%%%%%

Let $x_-$ be the local minimum of the higher-energy well, with
$\Delta V=V(x_-)-V(x_+)>0$.
Then
\begin{equation}
p_{\beta_i}(x_-)\sim \ee^{-\beta_i \Delta V}\ll p_{\beta_{i}}(x_+).
\end{equation}

In a neighborhood of $x_-$,
\begin{equation}
V(x)-\langle V\rangle_{\beta_i}
= \Delta V + O\left((x-x_-)^2\right) - O\left(\frac{1}{\beta_i}\right) > 0
\end{equation}
for sufficiently large $\beta_i$.

Thus, regions that are \emph{not} dominant in probability necessarily correspond to
positive values of $V(x)-\langle V\rangle_{\beta_i}$.

%%%%%%%%%%%%%%%%%%%%%%%%%%%%%%%%%%%%%%%%%%%%%%%%%%%%%%%%%%%%
\subsection{Summary of applicability}
%%%%%%%%%%%%%%%%%%%%%%%%%%%%%%%%%%%%%%%%%%%%%%%%%%%%%%%%%%%%

%\begin{itemize}
%\item 
The sign correlation holds \emph{generally} for any smooth confining potential
with a unique global minimum in the low initial temperature regime.
%\item 
For asymmetric double wells, it holds uniformly in regions that dominate the
Boltzmann weight.
%\item 
%For connected harmonic--antiharmonic--harmonic potentials, the statement can be
%proven by explicit calculation without approximation beyond large $\beta$.
%\end{itemize}
The argument is therefore not merely physical intuition but a consequence of precise asymptotic estimates.

%%%%%%%%%%%%%%%%
\section{Proof of the existence of the Mpemba temperature}\label{app:transition_temp}
%%%%%%%%%%%%%%%

In this Appendix, we provide a proof for the existence of the Mpemba temperature $\beta_{\mathcal{M}}$ in an asymmetric double-well potential with a wall on the shallow side, which is on the domain $[-L_-,\infty]$. Let $x_-$ and $x_+$ denote the local minima of the potential, satisfying $x_- < x^* < x_+$, where $x^*$ is the position of the barrier saddle point. 

\begin{theorem}
Consider any double-well potential $V(x)$ that satisfies: (i) $V(x)$ diverges fast enough such that $\int_{x^*}^\infty\dd{x}\, \ee^{-\beta V(x)}<\infty$ for any $\beta>0$, and (ii) $V(x_-)>V(x_+)$. Then, the following equation always has at least one solution $\beta_i>0$:
\begin{equation}
    \frac{\int_{-L_{-}}^{x^*}\dd x\,V(x)\ee^{-\beta_i V(x)}}{\int_{-L_{-}}^{x^*}\dd x\, \ee^{-\beta_i V(x)}} = \frac{\int_{x^*}^{\infty}\dd x\, V(x)\ee^{-\beta_i V(x)}}{\int_{x^*}^{\infty}\dd x\, \ee^{-\beta_i V(x)}}.
\end{equation}
\end{theorem}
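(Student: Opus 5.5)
We work with the function $\delta U(\beta_i):=U_-(\beta_i)-U_+(\beta_i)$, where $U_-$ and $U_+$ are the left- and right-hand sides of the equation in the theorem. The plan is to show that $\delta U$ is continuous on $(0,\infty)$, strictly positive for large $\beta_i$ and strictly negative for small $\beta_i$; the intermediate value theorem then yields a root. We assume, as in the setting of the appendix, that $V$ is continuous, that the left well lies inside the box ($-L_-<x_-<x^*$), and that $V$ is bounded below.

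\textbf{Continuity.} On the compact interval $[-L_-,x^*]$, continuity of $U_-$ in $\beta_i>0$ is immediate by dominated convergence, and the denominator of $U_-$ is positive. For $U_+$, fix a compact window $\beta_i\in[\beta_1,\beta_2]\subset(0,\infty)$. Since $V\ge V_{\min}$, we have $V\ee^{-\beta_i V}\le (V-V_{\min}+|V_{\min}|)\,\ee^{-\beta_1 V}\ee^{(\beta_1-\beta_i)V}$. Using $y\ee^{-\epsilon y}\le C_\epsilon$ together with hypothesis (i) at a slightly smaller $\beta$, both integrands of $U_+$ are dominated by an integrable function on $[x^*,\infty)$. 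Dominated convergence then gives continuity of $U_+$.

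\textbf{Low-temperature limit ($\beta_i\to\infty$).} By Laplace's method, exactly as in Appendix~\ref{Sign_DV}, the left average concentrates on the minimum of $V$ over $[-L_-,x^*]$, which is the well bottom $x_-$. If the wall is below the well floor, the minimum could instead be at the wall, but $V(-L_-)>V(x_-)$ keeps the argument intact. Hence $U_-\to V(x_-)$. Similarly, $U_+\to\inf_{x>x^*}V=V(x_+)$, where integrability from (i) controls the tail. Therefore $\delta U\to V(x_-)-V(x_+)>0$ by hypothesis (ii).

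\textbf{High-temperature limit ($\beta_i\to0$).} Here $U_-\to\frac{1}{L_-+x^*}\int_{-L_-}^{x^*}V\,\dd x$, which is finite. The main obstacle is showing $U_+\to+\infty$; a naive limit fails because the right side has infinite extent. We will argue that for any $M$, the mass fraction of $\pi(\cdot,\beta_i)$ restricted to $[x^*,\infty)$ that lies in the set $\{V\le M\}$ tends to $0$ as $\beta_i\to0$. Since $V\to\infty$, the set $\{V\le M\}\cap[x^*,\infty)$ is contained in a bounded interval $[x^*,R]$, so its weight stays bounded, at most $(R-x^*)\ee^{\beta_i|V_{\min}|}$. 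Meanwhile the total weight $\int_{x^*}^\infty \ee^{-\beta_iV}\,\dd x\ge\int_{x^*}^{X}\ee^{-\beta_i V}\,\dd x\to X-x^*$ for every $X$ by monotone convergence, so it diverges. Then $U_+\ge M(1-o(1))+V_{\min}\,o(1)$, which gives $U_+\to\infty$.

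Consequently $\delta U<0$ for small $\beta_i$ and $\delta U>0$ for large $\beta_i$, and continuity provides a root $\beta_i=\beta_{\mathcal M}$. The delicate points are the uniform domination needed for continuity near any fixed $\beta_i$, and the divergence of $U_+$ as $\beta_i\to0$; we expect the latter to be the main obstacle, and the mass-fraction argument above is how we intend to handle it.
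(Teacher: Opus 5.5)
Your proposal is correct and follows the paper's own proof: the intermediate value theorem applied to $\delta U$, where the key step $U_+\to\infty$ as $\beta_i\to0$ is obtained, as in Proposition~\ref{prop:1}, by showing that the total weight on $[x^*,\infty)$ diverges while the weight of the bounded low-energy region $\{V\le M\}$ stays bounded. Your dominated-convergence argument for the continuity of $\delta U$, which the paper only asserts, is a welcome addition; the one small correction is that $\int_{x^*}^{X}\ee^{-\beta_i V}\,\dd x\to X-x^*$ follows from bounded convergence rather than monotone convergence, since $V$ may change sign on $[x^*,X]$.
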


\begin{proof}
For convenience, we define the following function of $\beta_i$:
\begin{equation}
    \delta U(\beta_i)\coloneqq \frac{\int_{-L_{-}}^{x^*}\dd{x}\,V(x)\ee^{-\beta_i V(x)}}{\int_{-L_{-}}^{x^*}\dd{x}\,\ee^{-\beta_i V(x)}} - \frac{\int_{x^*}^{\infty}\dd{x}\,V(x)\ee^{-\beta_i V(x)}}{\int_{x^*}^{\infty}\dd{x}\,\ee^{-\beta_i V(x)}}.
\end{equation}
Note that
\begin{align}
    \frac{\int_{-L_{-}}^{x^*}\dd{x}\,V(x)\ee^{-\beta_i V(x)}}{\int_{-L_{-}}^{x^*}\dd{x}\,\ee^{-\beta_i V(x)}}&\le \max_{-L_{-}\le x\le x^*}V(x),\\
    \lim_{\beta_i\to 0}\frac{\int_{x^*}^{\infty}\dd{x}\,V(x)\ee^{-\beta_i V(x)}}{\int_{x^*}^{\infty}\dd{x}\,\ee^{-\beta_i V(x)}}&=\infty,~\text{(see Prop.~\ref{prop:1})}\\
    \lim_{\beta_i\to\infty}\frac{\int_{-L_{-}}^{x^*}\dd{x}\,V(x)\ee^{-\beta_i V(x)}}{\int_{-L_{-}}^{x^*}\dd{x}\,\ee^{-\beta_i V(x)}}&=V(x_-),\\
    \lim_{\beta_i\to\infty}\frac{\int_{x^*}^{\infty}\dd{x}\,V(x)\ee^{-\beta_i V(x)}}{\int_{x^*}^{\infty}\dd{x}\,\ee^{-\beta_i V(x)}}&=V(x_+).
\end{align}
Evaluating $\delta U(\beta_i)$ at two limits $\beta_i\to 0$ and $\beta_i\to\infty$, we obtain
\begin{align}
    \lim_{\beta_i\to 0}\delta U(\beta_i)&=-\infty,\\
    \lim_{\beta_i\to \infty}\delta U(\beta_i)&=V(x_-)-V(x_+)>0.
\end{align}
From the intermediate value theorem, it follows that the equation $\delta U(\beta_i)=0$ has at least one nontrivial solution $\beta_i>0$.

\end{proof}

\begin{proposition}\label{prop:1}
For any potential $V(x)$ satisfying the conditions above, the following equality holds:
\begin{equation}
    \lim_{\beta_i\to 0}\frac{\int_{x^*}^{\infty}\dd{x}\,V(x)e^{-\beta_i V(x)}}{\int_{x^*}^{\infty}\dd{x}\,e^{-\beta_i V(x)}}=\infty.
\end{equation}

\end{proposition}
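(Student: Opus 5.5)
The plan is to show that, for every threshold $M>0$, the Gibbs mean $\langle V\rangle_{+,\beta_i}$ on $[x^*,\infty)$ eventually exceeds $M$ as $\beta_i\to 0$. The starting observation is that condition (i), together with $V(x)\to\infty$ as $x\to\infty$ (implicit in the double-well setting, and needed for the integrals to converge), gives two facts. For any $M$, the set $A_M=\{x\ge x^*: V(x)\le M\}$ is bounded, say contained in $[x^*,R_M]$. Moreover, $V$ is bounded below on $[x^*,\infty)$, with $V\ge V(x_+)=:V_{\min}$.

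Split the denominator $Z_+(\beta_i)=\int_{x^*}^\infty \dd x\,\ee^{-\beta_i V}$ into the part over $A_M$ and the part over its complement. On $A_M$ we have $\ee^{-\beta_i V}\le \ee^{-\beta_i V_{\min}}$, so that part is at most $|A_M|\,\ee^{-\beta_i V_{\min}}$. This stays bounded as $\beta_i\to0$.

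The key step, and the only real obstacle, is to show that the full denominator diverges: $Z_+(\beta_i)\to\infty$ as $\beta_i\to 0$. I would use monotone convergence. As $\beta_i\downarrow 0$, the integrand $\ee^{-\beta_i V(x)}$ increases pointwise to $1$ wherever $V>0$. By adding a constant to $V$ (which shifts the mean by the same constant and does not change the ratio structure), we may assume $V>0$ on $[x^*,\infty)$. The integral then tends to $\int_{x^*}^\infty 1\,\dd x=\infty$. Consequently, the fraction of Gibbs mass on $A_M$ satisfies
\begin{equation}
q_M(\beta_i):=\frac{\int_{A_M}\ee^{-\beta_i V}\dd x}{Z_+(\beta_i)}\to 0 .
\end{equation}

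Finally, bound the mean from below:
\begin{equation}
\langle V\rangle_{+,\beta_i}\ge V_{\min}\,q_M(\beta_i)+M\,(1-q_M(\beta_i)) .
\end{equation}
This holds because $V\ge V_{\min}$ on $A_M$ and $V>M$ on its complement. The right-hand side tends to $M$ as $\beta_i\to0$, so $\liminf_{\beta_i\to0}\langle V\rangle_{+,\beta_i}\ge M$. Since $M$ is arbitrary, the limit is $+\infty$.

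To apply monotone convergence cleanly we need $V>0$ after the shift; if $V_{\min}$ is negative we simply replace $V$ by $V-V_{\min}+1$, and the shift by a constant is harmless in the conclusion. No growth rate beyond (i) is needed.
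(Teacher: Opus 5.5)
Your proof is correct and follows essentially the same route as the paper: shift $V$ to be nonnegative on $[x^*,\infty)$, show that $\int_{x^*}^{\infty}\dd x\,\ee^{-\beta_i V(x)}$ diverges as $\beta_i\to 0$ while the Boltzmann weight of the low-energy region (your $A_M$, the paper's $[x^*,x_M]$) stays bounded, and conclude that the restricted mean eventually exceeds a constant multiple of $M$. The differences are cosmetic: you obtain the divergence from monotone convergence instead of the paper's explicit window bound $\int_z^{z+M}\dd x\,\ee^{-\beta_i V(x)}\ge M\ee^{-1}$, and your convex-combination estimate gives $\liminf\ge M$ rather than the paper's $M/2$.
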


\begin{proof}
Without loss of generality, we can assume that $V(x)\ge 0~\forall x\ge x^*$ (if not, we can shift the potential as $V(x)\leftarrow V(x) - \min_{x\ge x^*}V(x)$ and the analysis remains unchanged).
For any $M>0$, consider $\beta_i=1/\max_{x\in[z,M+z]}|V(x)|$.
Then,
\begin{equation}
    \int_{z}^{\infty}\dd{x}\,\ee^{-\beta_i V(x)}>\int_{z}^{M+z}\dd{x}\,\ee^{-\beta_i V(x)}\ge M\ee^{-1}.
\end{equation}
Therefore,
\begin{equation}
    \lim_{\beta_i\to 0}\int_{z}^{\infty}\dd{x}\,\ee^{-\beta_i V(x)}=\infty,~~\forall z.
\end{equation}
For any $M>0$, let $x_M>x_*$ be the smallest number such that $V(x)\ge M$ for any $x\ge x_M$.
Then, we can evaluate as follows:
\begin{equation}
    \begin{split}
    &\frac{\int_{x^*}^{\infty}\dd{x}\,V(x)\ee^{-\beta_i V(x)}}{\int_{x^*}^{\infty}\dd{x}\,\ee^{-\beta_i V(x)}}\\
    &\ge\frac{M\int_{x_M}^{\infty}\dd{x}\,\ee^{-\beta_i V(x)}}{\int_{x^*}^{x_M}\dd{x}\,\ee^{-\beta_i V(x)}+\int_{x_M}^{\infty}\dd{x}\,\ee^{-\beta_i V(x)}}\\ &\ge \frac{M\int_{x_M}^{\infty}\dd{x}\,\ee^{-\beta_i V(x)}}{x_M-x_*+\int_{x_M}^{\infty}\dd{x}\,\ee^{-\beta_i V(x)}}.
    \end{split}
\end{equation}
Since $\lim_{\beta_i\to 0}\int_{x_M}^{\infty}\dd{x}\,\ee^{-\beta_i V(x)}=\infty$, one can find $\beta_i'>0$ such that
\begin{equation}
    \int_{x_M}^{\infty}\dd{x}\,\ee^{-\beta_i V(x)}\ge x_M-x_*,~~\forall \beta_i\le\beta_i'.
\end{equation}
Consequently, the following inequality holds for any $\beta_i\le\beta_i'$:
\begin{equation}
    \frac{\int_{x^*}^{\infty}\dd{x}\,V(x)e^{-\beta_i V(x)}}{\int_{x^*}^{\infty}\dd{x}\,e^{-\beta_i V(x)}}\ge\frac{M}{2}.
\end{equation}
Therefore, 
\begin{equation}
    \lim_{\beta_i\to 0}\frac{\int_{x^*}^{\infty}\dd{x}\,V(x)\ee^{-\beta_i V(x)}}{\int_{x^*}^{\infty}\dd{x}\,\ee^{-\beta_i V(x)}}=\infty.
\end{equation}

\end{proof}

\end{appendix}

%%%%%%%%%%%%%%%%%%%%%%%%%%%%%%%%%%%%%%%%%%%%%%%%%%%%%%%%%%%%%%%%%%%%%%%%%%%%%%%%%
%\bibliography{mpemba-gemini}

\begin{thebibliography}{99}
%%%%%%%%%%%%%%%%%%%%%%%%%%%%%%%%%%%%%%%%%%%%%%%%

\bibitem{mpemba1969cool}
E. B. Mpemba and D. G. Osborne, Cool?
	\href{https://doi.org/10.1088/0031-9120/4/3/312}{Phys. Educ. {\bf 4}, 172 (1969)}.
\bibitem{burridge2016questioning} 
H. C. Burridge and P. F. Linden, Questioning the Mpemba effect: hot water does not cool more quickly
than cold, 
\href{https://doi.org/10.1038/srep37665}{Sci. Rep. \textbf{6}, 37665 (2016)}.

\bibitem{katz2017reply}
J. I. Katz, Reply to Burridge $\&$ Linden: Hot water may freeze sooner than cold, 
\href{https://arxiv.org/abs/1701.03219}{arXiv preprint arXiv:1701.03219 (2017)}.


%%%%%%%%%%%%%%%%%%%%%%%%
\bibitem{Teza25}
G. Teza, J. Bechhoefer, A. Lasanta, O. Raz, and M. Vucelja,
Speedups in nonequilibrium thermal relaxation: Mpemba and related effects,
\href{https://doi.org/10.1016/j.physrep.2025.10.009}{Phys. Rep. \textbf{1164}, 1 (2026)}.

\bibitem{Ares25_review}
F. Ares,  P. Calabrese, and S. Murciano, The quantum Mpemba effects,
\href{https://doi.org/10.1038/s42254-025-00838-0}{Nat. Rev. Phys. \textbf{7}, 451 (2025)}.

\bibitem{Bechhoefer2021}
J. Bechhoefer, A. Kumar, and R. Ch\'{e}trite, A fresh understanding of the Mpemba effect, \href{https://doi.org/10.1038/s42254-021-00349-8}{Nat. Rev. Phys. \textbf{3}, 534 (2021)}.

%%%%%%%%%%%%%%%%%%%%%%%%
\bibitem{Lu17}
	Z. Lu and O. Raz, 
	Nonequilibrium Thermodynamics of the Markovian Mpemba Effect and Its Inverse, 
	\href{https://doi.org/10.1073/pnas.1701264114}
	{Proc. Natl. Acad. Sci. U.S.A. {\bf 114}, 5083 (2017)}.

\bibitem{Klich19}
	I. Klich, O. Raz, O. Hirschberg, and M. Vucelja, 
    Mpemba Index and Anomalous Relaxation,
	\href{https://doi.org/10.1103/PhysRevX.9.021060}
	{Phys. Rev. X {\bf 9}, 021060 (2019)}.

\bibitem{Busiello21}
D. M. Busiello, D. Gupta, and A. Maritan, 
Inducing and optimizing Markovian Mpemba effect with stochastic reset,
\href{https://iopscience.iop.org/article/10.1088/1367-2630/ac2922/meta}{New J. Phys. {\bf 23}, 103012 (2021)}.



%%%%%%%%%%%%%%%%%
\bibitem{kumar2020}
A. Kumar and J. Bechhoefer, 
Exponentially faster cooling in a colloidal system,
\href{https://www.nature.com/articles/s41586-020-2560-x}{Nature {\bf 584},  64 (2020)}.

\bibitem{kumar2021}
A. Kumar, Anomalous relaxation in colloidal systems,
\href{https://summit.sfu.ca/item/34663}{PhD thesis, Simon Fraser University, 2021.}

\bibitem{Kumar22}
A. Kumar, R. Ch\'{e}trite and J. Bechhoefer,
Anomalous heating in a colloidal system,
\href{https://doi.org/10.1073/pnas.2118484119}{Proc. Natl. Acad. Sci. U.S.A.  {\bf 119}, e2118484119 (2022).}	


\bibitem{Malhotra24}
I. Malhotra and H. L\"{o}wen, Double Mpemba effect in the cooling of trapped colloids,
\href{https://doi.org/10.1209/0295-5075/ac8573}{J. Chem. Phys. \textbf{161}, 164903 (2024). }

\bibitem{Tian2025}
Y. Tian, Y. Zheng, L.-H. Liu, L. Wang, G.-C. Guo, and F.-W. Sun, Experimental study of Mpemba effect in an energy Langevin system, \href{https://journals.aps.org/prresearch/abstract/10.1103/5p4l-1515}{Phys. Rev. Research \textbf{7}, L042020 (2025).}

%%%%%%%%%%%%%%%



\bibitem{Santos17}
	A. Lasanta, F. Vega Reyes, A. Prados, and A. Santos, 
    When the Hotter Cools More Quickly: Mpemba Effect in Granular Fluids,
	\href{https://doi.org/10.1103/PhysRevLett.119.148001}
	{Phys. Rev. Lett. {\bf 119}, 148001 (2017)}.

		

\bibitem{Torrente19}
	A. Torrente, M. A. L\'{o}pez-Casta\~{n}o, A. Lasanta, F. Vega Reyes, A. Prados, and A. Santos, 
    Large Mpemba-like effect in a gas of inelastic rough hard spheres,
	\href{https://doi.org/10.1103/PhysRevE.99.060901}
	{Phys. Rev. E {\bf 99}, 060901(R) (2019)}.
\bibitem{Biswas20}
	A. Biswas, V. V. Prasad, O. Raz, and R. Rajesh,
    Mpemba effect in driven granular gas,
	\href{https://doi.org/10.1103/PhysRevE.102.012906}
	{Phys. Rev. E {\bf 102}, 012906 (2020)}.
\bibitem{Biswas21}
	A. Biswas, V. V. Prasad, and R. Rajesh,
    Mpemba effect in an anisotropically driven granular Maxwell gases,
	\href{https://doi.org/10.1209/0295-5075/ac2d54}
	{EPL {\bf 136}, 46001 (2021)}.	
\bibitem{Mompo20}
	E. Momp\'{o}, M. A. L\'{o}pez Casta\~{n}o, A. Lasanta, F. Vega Reyes, and A. Torrente,
    Memory effects in a gas of viscoelastic particles,
	\href{https://doi.org/10.1063/5.0050804}
	{Phys. Fluids {\bf 33}, 062005 (2021)}.


\bibitem{Megias22b}
A. Meg\'{i}as and A. Santos, 
Mpemba-like effect protocol for granular gases of inelastic and rough hard disks,
\href{https://doi.org/10.3389/fphy.2022.971671}{Front. Phys. {\bf 10}, 971671 (2022)}.


\bibitem{Patron23}
A. Patr\'{o}n, B. S\'{a}nchez-Rey, C. A. Plata and A. Prados, 
Non-equilibrium memory effects: Granular fluids and beyond,
\href{https://doi.org/10.1209/0295-5075/acf7e5}{EPL {\bf 143}, 61002 (2023)}.
\bibitem{Keller18}
	T. Keller, V. Torggler, S. B. J\"{a}ger, S. Sch\"{u}tz, H. Ritsch, and G. Morigi, 
Quenches across the self-organization transition in multimode cavities, 
	\href{https://doi.org/10.1088/1367-2630/aaa161}
	{New J. Phys. {\bf 20}, 025004 (2018)}.	

\bibitem{Santos20}
	A. Santos and A. Prados,
    Mpemba effect in molecular gases under nonlinear drag,
	\href{https://doi.org/10.1063/5.0016243}
	{Phys. Fluids {\bf 32}, 072010 (2020)}.

\bibitem{Patron21}
 A. Patr\'{o}n, B. S\'{a}nchez-Rey and A. Prados, 
 Strong nonexponential relaxation and memory effects in a fluid with nonlinear drag,
\href{https://doi.org.org/10.1103/PhysRevE.104.064127}{Phys. Rev. E {\bf 104}, 064127 (2021).}	
	
\bibitem{Takada21a}
S. Takada, H. Hayakawa, and A. Santos, 
Mpemba effect in inertial suspensions,
\href{https://doi.org.org/10.1103/PhysRevE.103.032901}{Phys. Rev. E {\bf 103}, 032901 (2021).}	

%\bibitem{Takada21b}
%S. Takada, \href{https://doi.org/10.1051/epjconf/202124904001}{EPJ Web. Conf. {\bf 249}, 04001 (2021). }

\bibitem{Santos24}
A. Santos, 
Mpemba meets Newton: Exploring the Mpemba and Kovacs effects in the time-delayed cooling law,
\href{https://doi.org/10.1103/PhysRevE.109.044149}{Phys. Rev. E \textbf{109}, 044149 (2024)}.


\bibitem{Lin22}
J. Lin, K. Li, J. He, J. Ren and J. Wang,
Power statistics of Otto heat engines with the Mpemba effect,
\href{https://doi.org/10.1103/PhysRevE.105.014104}{Phys. Rev. E {\bf 105}, 014104 (2022).}

\bibitem{Pal24}
A. Biswas and A. Pal,
Mpemba effect on nonequilibrium active Markov chains,
\href{https://doi.org/10.1103/PhysRevE.111.054136}{Phys. Rev. E \textbf{111}, 054136 (2025)}.

\bibitem{Biswas23a}
A. Biswas, R. Rajesh and A. Pal,  
Mpemba effect in a Langevin system: Population statistics, metastability, and other exact results, 
\href{https://doi.org/10.1063/5.0155855}{J. Chem. Phys. {\bf 159}, 044120 (2023)}.

\bibitem{Biswas_thesis}
A. Biswas, 
Mpemba effect in granular and Langevin systems, 
\href{https://www.imsc.res.in/xmlui/bitstream/handle/123456789/619/HBNI-TH-228.pdf?sequence=1&isAllowed=y}{The thesis of The Institute of Mathematical Sciences, Chennai}.


\bibitem{chetrite2021metastable}
R. Chétrite, A. Kumar, and J. Bechhoefer, 
The Metastable Mpemba Effect Corresponds to a Non-monotonic Temperature Dependence of Extractable Work, 
\href{https://doi.org/10.3389/fphy.2021.654271}{Frontiers in Physics \textbf{9}, 654271 (2021)}.

\bibitem{walker2021anomalous}
M. R. Walker and M. Vucelja, 
Anomalous thermal relaxation of Langevin particles in a piecewise-constant potential, 
\href{https://iopscience.iop.org/article/10.1088/1742-5468/ac2edc}{J. Stat. Mech.: Theory and Experiment 2021, 113105 (2021)}.

\bibitem{Deguenther22}
J. Deg\"{u}nther and U. Seifert,
Anomalous relaxation from a non-equilibrium steady state: An isothermal analog of the Mpemba effect,
\href{https://iopscience.iop.org/article/10.1209/0295-5075/ac8573}{EPL \textbf{139}, 41002 (2022).}





\bibitem{Yang20}
Z.-Y. Yang and J.-X. Hou, 
Non-Markovian Mpemba effect in mean-field systems,
\href{https://doi.org/10.1103/PhysRevE.101.052106}{Phys. Rev. E {\bf 101}, 052106 (2020).}


\bibitem{Strachan25}
D. J. Strachan, A.  Purkayastha, and S. R. Clark,
Non-Markovian Quantum Mpemba Effect,
\href{https://doi.org/10.1103/PhysRevLett.134.220403}{Phys. Rev. Lett. \textbf{134}, 220403 (2025)}.

\bibitem{Yang22}
Z.-Y. Yang and J.-X. Hou, 
Mpemba effect of a mean-field system: The phase transition time,
\href{https://doi.org/10.1103/PhysRevE.105.014119}{Phys. Rev. E {\bf 105}, 014119 (2022).}




\bibitem{Gonzalez2021}
R. G. Gonz\'{a}lez, N. Khalil and V. Garz\'{o}, 
Mpemba-like effect in driven binary mixtures,
\href{https://doi.org/10.1063/5.0050530}{Phys. Fluids {\bf 33}, 053301 (2021)}.

\bibitem{Baity-Jesi19}
	M. Baity-Jesi, E. Calore, A. Cruz, L. A. Fernandez, J. M. Gil-Narvi\'{o}n, A. Gordillo-Guerrero, D. I\~{n}iguez, A. Lasanta, A. Maiorano, E. Marinari, V. Martin-Mayor, J. Moreno-Gordo, A. Mu\~{n}oz Sudupe, D. Navarro, G. Parisi, S. Perez-Gaviro, F. Ricci-Tersenghi, J. J. Ruiz-Lorenzo, S. F. Schifano, B. Seoane, A. Taranc\'{o}n, R. Tripiccione, and D. Yllanes,
    The Mpemba effect in spin glasses is a persistent memory effect,
	\href{https://doi.org/10.1073/pnas.1819803116}{Proc. Natl. Acad. Sci. U.S.A. {\bf 116}, 15350 (2019)}.



\bibitem{Greaney11}
	P. A. Greaney, G. Lani, G. Cicero, and J. C. Grossman,
    Mpemba-Like Behavior in Carbon Nanotube Resonators,
\href{https://doi.org/10.1007/s11661-011-0843-4}
	{Metall. Mater. Trans. A {\bf 42}, 3907 (2011)}.
	

\bibitem{Gonzalez21}
	I. Gonz\'{a}lez-Adalid Pemrt\'{i}n, E. Momp\'{o}, A. Lasanta, V. Mart\'{i}n-Mayor and J. Salas, 
    Slow growth of magnetic domains helps fast evolution routes for out-of-equilibrium dynamics,
\href{https://doi.org/10.1103/PhysRevE.104.044114}{Phys. Rev. E {\bf 104}, 044114 (2021)}.

%\bibitem{Chetrite21}
%R. Che\'{e}trite, A. Kumar and J. Bechhoefer, \href{https://doi.org/10.3389/fphy.2021.654271}{Front. Phys. {\bf 9}, 654271 (2021).}

\bibitem{Holtzman22}
R. Holtzman and O. Raz, 
Landau theory for the Mpemba effect through phase transitions,
\href{https://doi.org/10.1038/s42005-022-01063-2}{Communications  Physics {\bf 5}, 280 (2022)}.

\bibitem{Megias22a}
A. Meg\'{i}as, A. Santos, and A. Prados, 
Thermal versus entropic Mpemba effect in molecular gases with nonlinear drag,
\href{https://doi.org/10.1103/PhysRevE.105.054140}{Phys. Rev. E {\bf 105}, 054140 (2022)}.

\bibitem{Biswas23}
A. Biswas, V. V. Prasad, and R. Rajesh, 
Mpemba effect in driven granular gases: Role of distance measures,
\href{https://doi.org/10.1103/PhysRevE.108.024902}{Phys. Rev. E \textbf{108}, 024902 (2023).}

\bibitem{Pemartin23}
I. G. Pemartin, E. Momp\'{o}, A. Lasanta, V. Martin-Mayor and  J. Salas, 
Shortcuts of Freely Relaxing Systems Using Equilibrium Physical Observables,
\href{https://doi.org/10.1103/PhysRevLett.132.117102}{Phys. Rev. Lett. \textbf{132}, 117102 (2024)}.


\bibitem{Antonov26}
A. P. Antonov  and H. L\"{o}wen, 
Temperature overshooting in the Mpemba effect of frictional active matter,
\href{https://link.aps.org/doi/10.1103/j4wb-1d9d}{Phys. Rev. E \textbf{113}, 025407 (2026). }

%%%%%%%%%%
\bibitem{Ohga2024}
N. Ohga, H. Hayakawa, and S. Ito, 
Microscopic theory of Mpemba effects and a no-Mpemba theorem for monotone many-body systems,
\href{https://doi.org/10.48550/arXiv.2410.06623}{arXiv:2410.06623}.

\bibitem{Vu2025}
T. Van Vu and H. Hayakawa, 
Thermomajorization Mpemba Effect,
\href{https://doi.org/10.1103/PhysRevLett.134.107101}{Phys. Rev. Lett. \textbf{134}, 107101 (2025)}.
%%%%%%%%%%%%%


\bibitem{Nava19}
A. Nava and M. Fabrizio,
Lindblad dissipative dynamics in the presence of phase coexistence, 
\href{https://doi.org/10.1103/PhysRevB.100.125102}{Phys. Rev. B. {\bf 100}, 125102 (2019).}

\bibitem{Carollo21}
F. Carollo, A. Lasanta, and I. Lesanovsky,
Exponentially Accelerated Approach to Stationarity in Markovian Open Quantum Systems through the Mpemba Effect,
\href{https://doi.org/10.1103/PhysRevLett.127.060401}{Phys. Rev. Lett. {\bf 127}, 060401 (2021).}

\bibitem{Manikandan21}
S. K. Manikandan, 
Equidistant quenches in few-level quantum systems, 
\href{https://doi.org/10.1103/PhysRevResearch.3.043108}{Phys. Rev. Res. {\bf 3}, 043108 (2021)}.

\bibitem{Ivander23}
F. Ivander, N. Anto-Sztrikacs, and D. Segal, 
Hyperacceleration of quantum thermalization dynamics by bypassing long-lived coherences: An analytical treatment,
\href{https://doi.org/10.1103/PhysRevE.108.014130}{Phys. Rev. E \textbf{108}, 014130 (2023)}.

\bibitem{Ares23}
F. Ares, S. Murciano, and P. Calabrese, 
Entanglement asymmetry as a probe of symmetry breaking,
\href{https://doi.org/10.1038/s41467-023-37747-8}{Nat. Commun. {\bf 14}, 2036 (2023)}.

\bibitem{Chatterjee_2023}
A. K. Chatterjee, S. Takada, and H. Hayakawa, 
Quantum Mpemba Effect in a Quantum Dot with Reservoirs,
\href{https://doi.org/10.1103/PhysRevLett.131.080402}{Phys. Rev. Lett. {\bf 131}, 080402 (2023)}.

\bibitem{Chatterjee24}
A. K. Chatterjee, S. Takada, and H. Hayakawa, 
Multiple quantum Mpemba effect: Exceptional points and oscillations,
\href{https://doi.org/10.1103/PhysRevA.110.022213}{Phys. Rev. A \textbf{110}, 022213 (2024)}.

\bibitem{Joshi2024} 
L. K. Joshi, J. Franke, A. Rath, F. Ares, S. Murciano, F. Kranzl, R. Blatt, P. Zoller, B. Vermersch, P. Calabrese, C. F. Roos, and M. K. Joshi,
Observing the quantum Mpemba effect in quantum simulations,
\href{https://doi.org/10.1103/PhysRevLett.133.010402}{Phys. Rev. Lett. \textbf{133}, 010402 (2024)}.

\bibitem{Shapira2024}
S. A. Shapira, Y. Shapira, J. Markov, G. Teza, N. Akerman, O. Raz, and R. Ozeri,
Inverse Mpemba Effect Demonstrated on a Single Trapped Ion Qubit,
\href{https://doi.org/10.1103/PhysRevLett.133.010403}{Phys. Rev. Lett. \textbf{133}, 010403 (2024)}.

\bibitem{Rylands2024}
C. Rylands, K. Klobas, F. Ares, P. Calabrese, S. Murciano, and B. Bertini, 
Microscopic Origin of the Quantum Mpemba Effect in Integrable Systems,
\href{https://doi.org/10.1103/PhysRevLett.133.010401}{Phys. Rev. Lett. \textbf{133}, 010401 (2024).}


\bibitem{Moroder24}
M. Moroder, O. Culhane, K. Zawadzki, and J. Goold, 
Thermodynamics of the quantum Mpemba effect,
\href{https://doi.org/10.1103/PhysRevLett.133.140404}{Phys. Rev. Lett. \textbf{133}, 140404 (2024)}.

\bibitem{Chalas24}
K. Chalas, F. Ares, C. Rylands, and P. Calabrese,
Multiple crossing during dynamical symmetry restoration and implications for the quantum Mpemba effect,
\href{https://doi.org/10.1088/1742-5468/ad769c}{J. Stat. Mech. (2024) 103101}.

\bibitem{Ares24}
F. Ares, V. Vitale, and Sara Murciano, 
The quantum Mpemba effect in free-fermionic mixed states,
\href{https://doi.org.org/10.1103/PhysRevB.111.104312}{Phys. Rev. B \textbf{111}, 104312 (2025)}.

\bibitem{Yamashika2024}
S. Yamashika, F. Ares, and P. Calabrese, 
Entanglement asymmetry and quantum Mpemba effect in two-dimensional free-fermion systems,
\href{https://doi.org/10.1103/PhysRevB.110.085126}{Phys. Rev. B \textbf{110}, 085126 (2024).}


\bibitem{Liu2024}
S. Liu, H.-K. Zhang, S. Yin, and S.-X. Zhang, 
Symmetry Restoration and Quantum Mpemba Effect in Symmetric Random Circuits,
\href{https://doi.org/10.1103/PhysRevLett.133.140405}{Phys. Rev. Lett. \textbf{133}, 140405 (2024).}

\bibitem{Wang2024}
X. Wang and J. Wang, Mpemba effects in nonequilibrium open quantum systems,
\href{https://doi.org/10.1103/PhysRevResearch.6.033330}{Phys. Rev. Research, \textbf{6}, 033330 (2024).}

\bibitem{Nava2024}
A. Nava and R. Egger, Mpemba Effects in Open Nonequilibrium Quantum Systems,
\href{https://doi.org/10.1103/PhysRevLett.133.136302}{Phys. Rev. Lett. \textbf{133}, 136302 (2024).}

\bibitem{Longhi2024}
S. Longhi, Photonic Mpemba effect,
\href{https://doi.org/10.1364/OL.532503}{Opt. Lett. \textbf{49}, 5188 (2024).}



%%%%%%%%%%%%%%%%%%
\bibitem{Zhang2025}
J. Zhang, G. Xia, C.-W. Wu, T. Chen, Q. Zhang, Y. Xie, W.-B. Su, W. Wu, C.-W. Qiu, P.-X. Chen, W. Li, H. Jing, and Y.-L. Zhou, Observation of quantum strong Mpemba effect, 
\href{https://doi.org/10.1038/s41467-024-54303-0}{Nat. Commun. \textbf{16}, 301 (2025).}

\bibitem{Turkeshi2025}
X. Turkeshi, P. Calabrese, and A. De Luca, Quantum Mpemba Effect in Random Circuits, 
\href{https://link.aps.org/doi/10.1103/5d6p-8d1b}{Phys. Rev. Lett. \textbf{135}, 040403 (2025)}. 


\bibitem{Bao2025}
R. Bao and Z. Hou, Accelerating Quantum Relaxation via Temporary Reset: A Mpemba-Inspired Approach, 
\href{https://link.aps.org/doi/10.1103/g94p-7421}{Phys. Rev. Lett. \textbf{135}, 150403 (2025).} 

\bibitem{Yu2025}
H. Yu, S. Liu, and S.-X. Zhang, Quantum mpemba effects from symmetry perspectives, 
\href{https://doi.org/10.1007/s43673-025-00157-7}{AAPPS Bulletin \textbf{35}, 17 (2025).}

\bibitem{Beato2026} 
N. Beato and G. Teza, Relaxation Control of Open Quantum Systems, 
\href{https://link.aps.org/doi/10.1103/4frd-ck2z}{Phys. Rev. Lett. \textbf{136}, 070401 (2026).} 

\bibitem{Yamashika2026}
S. Yamashika and F. Ares, 
Quantum Mpemba Effect in Long-Range Spin Systems, 
\href{https://link.aps.org/doi/10.1103/52y5-8kl2}{Phys. Rev. Lett. \textbf{136}, 090402 (2026).} 






\bibitem{Yue26}
Y. Liu, T. Van Vu, R. Chétrite, F. van Wijland, and H. Hayakawa, 
The Mpemba effect likes to hit a wall,
\href{https://doi.org/10.48550/arXiv.2604.01543}{arXiv:2604.01543}.

\bibitem{Sagawa}
    T. Sagawa, 
    Entropy, Divergence, and Majorization in Classical and Quantum Thermodynamics 
    (Springer-Brief in Mathematical Physics, Springer, Singapore, 2022).
    See also 
    \href{https://doi.org/10.48550/arXiv.2007.09974}
    {arXiv:2007.09974}.


\bibitem{Hayakawa2026}
H. Hayakawa and S. Takada, 
Mpemba effect in a two-dimensional bistable potential,
\href{https://doi.org/10.48550/arXiv.2603.24148}{arXiv:2603.24148}.

%%%%%%%%%%%%%%%%%%%%%%%%%%%%

\bibitem{risken1996fokker} 
H. Risken, Fokker-planck equation, 2nd ed. (Springer, Berlin, 1996).


\bibitem{brey1990}
J. J. Brey and J. Casado, 
Generalized Langevin equations with time-dependent temperature, 
\href{https://doi.org/10.1007/BF01027298}{J. Stat. Phys. \textbf{61}, 713 (1990)}.

\bibitem{TynMyint-U-1978}
TynMyint-U, Ordinary Differential Equations (North-Holland, New York, 1978).


\bibitem{fortuin1971correlation}
C. M. Fortuin, P. W. Kasteleyn, and J. Ginibre, Correlation inequalities on some partially ordered sets, 
\href{https://doi.org/10.1007/BF01651330}{Commun. Math. Phys., 22, 89 (1971).} 

%%%%%%%%%%%%%%%%%%%%%%%%
\bibitem{matkowsky1977exit}
B. J. Matkowsky and Z. Schuss, 
The exit problem for randomly perturbed dynamical systems,
\href{https://doi.org/10.1137/0133024}{SIAM J. Appl. Math. \textbf{33}, 365 (1977).}

\bibitem{bovier2004metastability}
A. Bovier, M. Eckhoff, V. Gayrard, M. Klein, 
Metastability in Reversible Diffusion Processes I: Sharp Asymptotics for Capacities and Exit Times,
\href{https://doi.org/10.4171/JEMS/14}{J. Eur. Math. Soc. \textbf{6}, 399 (2004).}

%%%%%%%%%%%%%%%%%%%%%%%

%\bibitem{Abhay2024}
% A. Srivastav,  V. Pandey, B. Mohan, and A. K. Pati,
% Family of exact and inexact quantum speed limits for completely positive and trace-preserving dynamics,
% \href{https://doi.org/10.1103/npkr-c4vf}{Phys. Rev. A \textbf{112}, 052204 (2025)}.

\bibitem{Abramowitz}
    M. Abramowitz and I. A. Stegun, 
    Handbook of Mathematical Functions: With Formulas, Graphs, and Mathematical Tables 
    (Dover Publications, New York, 1964).
%%%%%%%%%%%



\if 0
\bibitem{morsch1979one}
M. M\"{o}rsch, H. Risken, and H. Vollmer, 
One-dimensional diffusion in soluble model potentials, \href{https://doi.org/.org10.1007/BF01320120}{Z. Phys. B, \textbf{32}, 245 (1979)}.

\bibitem{van1992stochastic}
N. G. van Kampen, Stochastic Processes in Physics and Chemistry (Elsevier, Oxford, 1992). 

\fi



%
\begin{comment}
%%%%%%%%%%%%%%%%%%%%%
\bibitem{Dekker87}
H. Dekker, Fractal analysis of chaotic tunneling of squeezed states in a double-well potential
\href{https://link.aps.org/doi.org/10.1103/PhysRevA.35.1825}{Phys. Rev. A \textbf{35}, 1825 (1987)}.

\bibitem{Song08}
D. Y. Song, Tunneling and energy splitting in an asymmetric double-well potential, \href{https://doi.org/10.1016/j.aop.2008.09.004}{Annals. Phys. \textbf{323}, 2991 (2008).}
\bibitem{Song15}
D. Y. Song, Localization or tunneling in asymmetric double-well potentials, \href{https://doi.org/10.1016/j.aop.2015.08.029}{Annals. Phys. \textbf{362}, 609 (2015)}.

\bibitem{DLMF} N. M. Temme, Chapter 12 Parabolic Cylinder Functions in \href{https://dlmf.nist.gov/12?utm_source=chatgpt.com}{Digital Library of Mathematical Functions}.

%%%%%%%%%%%%%%%%%%%%%%%%


\bibitem{Cugliandolo11}
L. F. Cugliandolo, The effective temperature, \href{https://iopscience-iop-org.kyoto-u.idm.oclc.org/article/10.1088/1751-8113/44/48/483001}{J. Phys. A: Math. and Theor. \textbf{44}, 483001 (2011)}.

\bibitem{Caroli1979}
B. Caroli. C. Caroli, and B. Roulet, Diffusion in a bistable potential: A systematic WKB treatment,
\href{https://doi.org/10.1007/BF01009609}{J. Stat. Phys. \textbf{21}, 415 (1979)}.


\end{comment}

\end{thebibliography}

\end{document}